\numberwithin{equation}{section}
\theoremstyle{definition}
\newtheorem*{dfn}{Definition}
\newtheorem{pro}{Proposition}[section]
\newtheorem{lem}[pro]{Lemma}
\newtheorem*{rmk}{Remark}
\newtheorem*{emp}{Example}
\newcommand{\eg}{\textit{e.g.}}
\newcommand{\cf}{\textit{cf.}}
\newcommand{\etc}{\textit{etc.}}
\newcommand{\wrt}{\text{w.r.t.}}
\newcommand{\etal}{\textit{et\ al.}}
\newcommand*{\markdef}[1]{{\normalfont\textbf{#1}}}
\newcommand*{\argmax}{\operatornamewithlimits{argmax}}
\newcommand*{\R}{\mathbb{R}}
\newcommand*{\Z}{\mathbb{Z}}
\newcommand*{\E}{\mathbb{E}}
\newcommand*{\D}{\mathbf{D}}
\newcommand*{\dd}{\mathbf{d}}
\newcommand*{\ww}{\mathbf{w}}
\newcommand*{\vv}{\mathbf{v}}
\newcommand*{\BL}{\text{BL}}
\newcommand*{\BM}{\text{BM}}
\newcommand*{\SL}{\text{SL}}
\newcommand*{\SM}{\text{SM}}
\DeclareMathOperator{\id}{id}
\begin{document}

\title[Dynamical Trading Mechanism]{Dynamical Trading Mechanism in Limit Order Markets}
\author[S. Wang]{Shilei Wang}
\curraddr{Ca' Foscari University of Venice, Venice, Italy}
\email{shilei.wang@unive.it}
\date{March 2013}

\begin{abstract}
This work's purpose is to understand the dynamics of limit order books in order-driven markets. We try to illustrate a dynamical trading mechanism attached to the microstructure of limit order markets. We capture the iterative nature of trading processes, which is critical in the dynamics of bid-ask pairs and the switching laws between different traders' types and their orders. In general, after introducing the atomic trading scheme, we study a general iterated trading process in both combinatorial and stochastic ways, and state a few results on the stability of a dynamical trading system. We also study the controlled dynamics of the spread and the mid-price in an iterated trading system, when their movements, generated from the dynamics of bid-ask pairs, are assumed to be restricted within some extremely small ranges.
\end{abstract}

\maketitle

\section{Introduction}

A market as a set of institutional arrangements is efficient in information transmission, wealth allocation, and value revelation. Whatever the intentions of agents involved in a market, their individual behaviors, either strategic or non-strategic, collectively shape various trading processes in the market. Hence, the microstructure of a market may have its representation as a corresponding trading mechanism, which exactly defines the trading games in the market. We can then catch proper insights into the market microstructure by studying on the level of trading processes. 

As for normal goods, agents can immediately exchange them through typical markets, in which the money as a medium is used to measure their intrinsic values. If the demand and supply are not proportionally concentrated or distributed in a market, auction or bilateral bargaining as trading mechanisms can be used to coordinate those unmatched parts in the market. But in a high-frequency electronic market with financial assets, the situation is more complex. Note that speculative activities on the market will make its log-scaled price move roughly in Brownian motion (see \eg\ Osborne \cite{osborne59}), and thus its volatility is propositional to the square root of the time. As a result, the liquidity in a financial market is valuable, and it makes any good at a specific time multi-dimensional. So the value of any specific good depends not only on its price, but on how long it has stayed on the market. Such a multi-dimensional good has a monopoly power, in the sense that its provision is decentralized on the market at each time, so its demand and supply are eventually concentrated. A more complicated trading mechanism, namely continuous double auction, can be used to smooth trading processes in those financial markets. The financial markets applying continuous double auction mechanism are traditionally called order-driven markets, or limit order markets.

The very first study on limit order markets is a byproduct from a paper on the cost of transacting carried by Demsetz \cite{demsetz68}, in which he defines the bid-ask spread as a markup ``paid for predictable immediacy of exchange in organized markets'' (see \cite{demsetz68}, p.\,36), and considers it as an important source in transaction costs. However, more later studies on limit order markets focus on their microstructure, rather than solely on the role of asymmetric information in trading decisions. A possible reason for this trend is that the dynamics of trading decisions and the evolution of markets are considered to be more important than the stable states in equilibrium, especially after the crash of October 1987, when all major world markets declined substantially, from the minimum $11.4\%$ in Austria to the maximum $45.8\%$ in Hong Kong (\cf\ Sornette \cite{sornette03}, p.\,5).

Nowadays, most stock exchanges, more or less as order-driven markets, adopt electronic order-driven platforms, partially because of the development of electronic communications networks and adaptive regulations on order handling rules. For instance, the Euronext Paris (formerly known as the Paris Bourse, and recently the NYSE Euronext), the Tokyo Stock Exchange, the Toronto Stock Exchange, the NYSE, and the NASDAQ are typical limit order markets, in which orders are essentially executed through electronic systems. They provide opportunities to generate stylized empirical facts on limit order markets. For example, Lehmann and Modest \cite{lehmann94} study the trading mechanism and the liquidity in the Tokyo Stock Exchange, Biais, Hillion, and Spatt \cite{biais95} study the limit order book and the order flow in the Paris Bourse, Harris and Hasbrouck \cite{harris96} measure the performance of SuperDOT traders in the NYSE, and Al-Suhaibani and Kryzanowski \cite{al-suhaibani00} analyze the order book and the order flow in the Saudi Stock Exchange (Tadawul).

The theoretical understandings on limit order markets can mainly be divided into two approaches. One is to explain the performance of a market and its stable states by modeling traders' behaviors in equilibrium. The other is to consider the market as a ``super-trader'' with zero intelligence, and explain or predict her behaviors statistically. The first approach understands traders' strategic behaviors and generates testable implications for a market by capturing traders' different attributes, for instance, being informed versus being uninformed used by Kyle \cite{kyle85}, and Glosten and Milgrom \cite{glosten85}, time preference used by Parlour \cite{parlour98}, and patience versus impatience used by Foucault, Kadan, and Kandel \cite{foucault05}, \etc\ The second approach analyzes the market by assuming that there exist a few statistical laws in the dynamics of the market. It is somehow able to catch certain profiles of a market, say, notably fat tails of the price distribution, concavity of the price impact function, scaling law of the spread \wrt\ the orders, and so on (see \eg\ Smith \etal\ \cite{smith03}, and Farmer, Patelli, and Zovko \cite{farmer05}, \etc).

Our work will take a middle position between these two approaches. We agree that the performance of a market and its evolution are determined by the behaviors of rational traders involved in the market, but in the meantime, we also keep in mind that statistical mechanics could be important in the price dynamics. Therefore, we decide to consider a large population of traders in a limit order market, and assume that each trader in the sample is rational, such that her behaviors are strategically optimal in a certain game-theoretic framework. We assume that each trader's rational trading decision can be represented in terms of the distribution of the order book on the market. So the individual rationality and the collective rationality are supposed to be determined interchangeably in the limit order market. We want to study how individual traders affect the price dynamics in the market, how a certain sequential trading process influences the stability of the market, how the introduced randomness in the trading process enhances the systemic stability, and why the market can evolve more predictably if we control some factors in the market. In general, we try to clarify a dynamical trading mechanism in the limit order market.

\section{Atomic Trading Scheme}\label{sec2}

\subsection{Preliminary Framework}\label{sec2.1}

We consider a generic order-driven market with a large population of traders, say $N$, in which the attributes of traders can be represented by their trading directions and demands of the liquidity. As usual, the trading direction is either selling or buying initiation, while the liquidity demand determines the type of any submitted order, which is either a limit order or a market order. So $N$ can be partitioned into four different groups by these two kinds of binary classification. Any trader drawn from $N$ will be in one and only one of the following four groups: (i) buyers submitting limit orders, (ii) buyers submitting market orders, (iii) sellers submitting limit orders, and (iv) sellers submitting market orders. To simplify the words used for descriptions, we denote the types of traders in the four groups by ``BL'', ``BM'', ``SL'', and ``SM'' respectively.

A trader's relative valuation on the equities in the market determines her trading role. If she has a value higher than the average level of her peer traders in the same market, she will be more likely to be a buyer. If her value is lower than the average, she will be more likely to be a seller. The trader's trading strategy is determined by her demand of the liquidity. If the bid-ask spread is sufficiently large, which means that the market lacks liquidity, then she will provide it on the market, and thus her order will be a limit one. On the other hand, if the bid-ask spread is very small, which means that the market is full of liquidity, then she will consume it, and thus her order will be a market one. In the case of limit order, we assume that it will definitely improve the attractiveness of the best quotes for a corresponding part of traders in the market. In the case of market order, by the normal principle of price-time priority, we assume that it will hit exactly the best bid or best ask, since the best quotes have the priority to meet any new market order.

In a real limit order market, however, the trading process is naturally complicated, simply as there exists a variety of possible trading strategies. Traders can submit orders, they can cancel submitted orders, they can submit hidden orders, they can split orders, and so on. So the limit order book can not be fully characterized only by its bid and ask prices, but the trading volumes and the depth of the order book are also very important. For instance, many limit orders may queue behind the quotes of the order book, and some market orders may have trading volumes not exactly equal to the depth at the best quotes. In both cases, the bid and ask prices will not be definitely changed after a new order, as the new limit order may just improve the depth at the best quotes of the order book, or it may not fully consume the shares at the quotes. 

If we focus on the dynamics of the quotes on the limit order market, these additional considerations will just make the time needed for an updated quote, as a jump, become uncertain. We assume that the order book's depth is equal to the trading volume of any new order. So we can consider the normal time domain $t+\Z$, rather than $t+\E$, where $t$ is the initial time, and $\E$ is an arbitrary subset of $\Z$ when there are jumps. As a result, we have only two critical trading executions for the price dynamics in the limit order market, namely, limit orders and market orders with their trading volumes equal to the order book's depth. Consequently, all the four types of traders in $N$ can definitely affect the market at each trading period.

\begin{dfn}
A trader is called a \markdef{marginal trader}, if the quotes of the limit order book are updated after her submitted order.
\end{dfn}

Since we have introduced the marginal trader in the limit order market, we now can look at the order book in a formal way. Suppose the limit order book has a best bid $b\in\R_+$ and a best ask $a\in\R_+$ at any time. We denote them by the bid-ask pair $(b,a)$. The bid-ask spread $s$, and the mid-price or the quote midpoint $m$ are determined by $(b,a)$, say, $s=a-b$ and $m=(b+a)/2$. Since there exists a tick size, denoted by $\tau>0$, as the minimal change of the prices in the market, we have $b\neq a$ and $s\geq\tau$. Moreover, we set a stricter condition for the lower bound of the spread in the market. We claim that there exists a lower bound $\underline{s}>\tau$ for the bid-ask spread at any time in the limit order market, or $s\geq\underline{s}$.

Naturally, there also exists an upper bound for the best ask $a$, say $\overline{a}$, since the value of any security on the market is limited for all trader in $N$. Besides that, note that $b\geq 0$, otherwise, there would be no demand in the market, as the inverse of the best bid would exist as a part of the ask side of the market. As a result, the pair $(b,a)$ should be located within a compact domain $W\subset\R_+^2$, which is defined by $b\geq 0$, $a\leq\overline{a}$, and $a-b\geq\underline{s}$. In the $b$-$a$ plane, $W$ can be geometrically represented as a triangle, whose vertices are $(0,\underline{s})$, $(0,\overline{a})$, and $(\overline{a}-\underline{s},\overline{a})$.

Let the time domain be $\Z$. At each time $t\in\Z$, the best bid and best ask are denoted by the bid-ask pair $(b_t,a_t)=\ww_t$, or the bid-ask vector $(b_t,a_t)'=\dd_t$. If the time $t$ is in the past, we use $\dd_t=(b_t,a_t)'$ as usual. If the time $t$ is in the future, we use a random variable $\D_t$, such that $\D_t=(B_t,A_t)'$, where $B_t$ and $A_t$ are stochastic forms of $b_t$ and $a_t$ respectively. For any given bid-ask pair $(b_t,a_t)$, we define the bid-ask spread by $s_t=a_t-b_t$, and the mid-price by $m_t=(b_t+a_t)/2$. If the time index is not so important in some cases, we also use $\ww$ and $\dd$ to represent respectively the pair $(b,a)$ and the vector $(b,a)'$, and the derived spread $s=a-b$, and mid-price $m=(b+a)/2$. 

Suppose we are being at the time $t\in\Z_+$ such that the events at $t$ have just happened, so we have a sequence of historical information,
\[
\{\dd_s,\ s\in\Z\ \text{and}\ s\leq t\}=\{\dd_{-\infty},\dotsc,\dd_{t-1},\dd_{t}\}, 
\]
and a sequence of uncertain or random information, which forms a stochastic process,
\[
\{\D_{s},\ s\in\Z\ \text{and}\ s>t\}=\{\D_{t+1},\D_{t+2},\dotsc,\D_\infty\}.
\]
Generally speaking, we want to study how the stochastic variable $\D_t$ can be determined by traders' strategies and previous information on the market in the form of $\dd_t$, or equivalently $s_t$ and $m_t$.

As for the whole order book, we suppose that at each time $t$ the adjacent quotes on both sides of the order book are equally distributed over $[0,b_t]$ and $[a_t,\overline{a}]$ respectively. We assume the difference between adjacent quotes on the same side of the order book is proportional to the bid-ask spread of the order book, as empirically found by Biais, Hillion, and Spatt \cite{biais95} using data from the Paris Bourse, and Al-Suhaibani and Kryzanowski \cite{al-suhaibani00} with facts from the Saudi Stock Exchange. Furthermore, the difference of adjacent quotes is equal to $\alpha(a_t-b_t)$ on the ask side of the book, and $\beta(a_t-b_t)$ on the bid side of the book, where $\alpha,\beta\in(0,1)$, which implies that the order book is slightly concave. Then the upper quote next to the best ask $a_t$ is equal to $a_t^+=a_t+\alpha(a_t-b_t)$, and the lower quote next to the best bid is equal to $b_t^-=b_t-\beta(a_t-b_t)$. We assume that the ratio of $\beta:1:\alpha$ in the differences between the four consecutive prices $b_t^-,b_t,a_t,a_t^+$ in the order book is an indicator for traders' collective rationality in the limit order market.

Note that $\alpha$ and $\beta$ can not be very close to $0$, otherwise the limit order appears there is not a rational decision, as the obtained price-time priority and the cost of the liquidity are not balanced optimally there. Also $\alpha$ and $\beta$ will not be close to $1$, otherwise more rational limit orders would emerge between adjacent quotes to exploit the capacity of the book there. In fact, $\alpha$ and $\beta$ are roughly $0.5$, as suggested statistically by Biais, Hillion, and Spatt \cite{biais95}. 

\subsection{Evolution of Bid-Ask Pairs}\label{sec2.2}

Suppose that a marginal trader gets to the market at the time $t$ just after the bid-ask pair $(b_t,a_t)$ forms, thus she has all the information at and before the time $t$ in the market. Her trading decision at $t$ can be denoted by the price $p_t$. If she submits a limit order, $b_t<p_t<a_t$, and if she submits a market order, $p_t\geq a_t$ or $p_t\leq b_t$. If she is a buyer, $p_t\leq a_t$, and if she is a seller, $p_t\geq b_t$. Recall that we have assumed the order book's depth at the quotes is equal to the trading volume of any new order, so a market order will definitely clear one of the limit orders at the best quotes, and the cleared order will be replaced by a less attractive limit order. A limit order will decrease the bid-ask spread, and surely improve one of the best quotes. Consequently, any type of marginal trader must change the states in the limit order book.

\subsubsection*{\bf Type BL}

If the marginal trader is a buyer and submits a limit order, then the best bid $b_{t+1}$ at the time $t+1$ will be $p_t=b_t+\rho$, where $\rho>0$, and the best ask $a_{t+1}$ at the time $t+1$ will remain unchanged. Suppose the marginal trader's decision process can be described by her maximizing the utility function with respect to $p_t$ or equivalently $\rho$, say $u(p_t)=u(\rho+b_t)$, where $u$ is concave in $\rho$ and of class $C^1$. By the requirement of rationality, we have $\rho^\ast\in\argmax_\rho u(p_t)$, such that $\rho^\ast>0$ and $a_t-b_t-\rho^\ast\geq\underline{s}$. But according to our assumption on the rationality of the limit order book, we directly have the following equation with the optimal $\rho^\ast$,
\begin{equation}
\rho^\ast=\beta(a_t-b_t-\rho^\ast),
\end{equation}
so $\rho^\ast=\frac{\beta}{1+\beta}(a_t-b_t)$. Thus $b_{t+1}=\frac{1}{1+\beta}b_t+\frac{\beta}{1+\beta}a_t$, and $a_{t+1}=a_t$, or we have $\dd_{t+1}=S_1\dd_t$, where
\[
S_1=
\begin{pmatrix}
\frac{1}{1+\beta} & \frac{\beta}{1+\beta}\\ 
0 & 1
\end{pmatrix},
\]
and $\dd_{t+1}$ is the realization of the random variable $\D_{t+1}$. After this trade of limit order on the bid side, we obtain a new bid-ask spread,
\begin{equation}
a_{t+1}-b_{t+1}=\frac{1}{1+\beta}(a_t-b_t).
\end{equation}

Note that $a_t-b_t-\rho^\ast\geq\underline{s}$ and $\rho^\ast=\frac{\beta}{1+\beta}(a_t-b_t)$ will generate $a_t-b_t\geq(1+\beta)\underline{s}$. So the marginal trader is of type BL only if the original bid-ask spread is greater than or equal to $(1+\beta)\underline{s}$, and if she is of type BL, she will choose an optimal improvement $\rho^\ast\geq\beta\underline{s}$. 

\subsubsection*{\bf Type BM}

If the marginal trader is again a buyer, but now she submits a market order hits $a_t$, then the best ask $a_{t+1}$ at the next period will be $a_t+\alpha(a_t-b_t)$, and the best bid $b_{t+1}$ will remain the same. Here, her decision process is neglected by the principle of price-time priority employed in the limit order market. So her decision set is a singleton, say $\{a_t\}$, which implies that $p_t=a_t$. Consequently, we have $b_{t+1}=b_t$ and $a_{t+1}=-\alpha b_t+(1+\alpha)a_t$, or $\dd_{t+1}=S_2\dd_t$, where
\[
S_2=
\begin{pmatrix}
1 & 0\\ 
-\alpha & 1+\alpha
\end{pmatrix}.
\]
After this trade of market order on the bid side, the new bid-ask spread is
\begin{equation}
a_{t+1}-b_{t+1}=(1+\alpha)(a_t-b_t).
\end{equation}

\subsubsection*{\bf Type SL}

If the marginal trader is a seller, and she submits a limit order, then the best ask $a_{t+1}$ at the time $t+1$ will be $a_t-\theta$, where $\theta>0$, and the best bid $b_{t+1}$ at the time $t+1$ will remain the same as $b_t$. Similar with the type BL, this type of marginal trader's decision process can be described as maximizing her utility $v(p_t)=v(-\theta+a_t)$, where $v$ is convex in $\theta$ and of class $C^1$. Her decision will admit the optimal choice $\theta^\ast\in\argmax_\theta v(p_t)$, such that $\theta^\ast>0$ and $a_t-\theta^\ast-b_t\geq\underline{s}$. By the represented rationality on the ratio of differences between adjacent quotes around the best quotes in the order book, we have
\begin{equation}
\theta^\ast=\alpha(a_t-\theta^\ast-b_t),
\end{equation}
so $\theta^\ast=\frac{\alpha}{1+\alpha}(a_t-b_t)$. Thus $b_{t+1}=b_t$, and $a_{t+1}=\frac{\alpha}{1+\alpha}b_t+\frac{1}{1+\alpha}a_t$, or more concisely $\dd_{t+1}=S_3\dd_t$, where
\[
S_3=
\begin{pmatrix}
1 & 0\\ 
\frac{\alpha}{1+\alpha} & \frac{1}{1+\alpha}
\end{pmatrix}.
\]
After this trade of limit order on the ask side, the bid-ask spread will be updated to
\begin{equation}
a_{t+1}-b_{t+1}=\frac{1}{1+\alpha}(a_t-b_t).
\end{equation}

Similar with the type BL, $a_t-\theta^\ast-b_t\geq\underline{s}$ and $\theta^\ast=\frac{\alpha}{1+\alpha}(a_t-b_t)$ will generate $a_t-b_t\geq(1+\alpha)\underline{s}$. So the marginal trader is of type SL only if the bid-ask spread is greater than or equal to $(1+\alpha)\underline{s}$, and if she is of type SL, she will choose an optimal improvement $\theta^\ast\geq\alpha\underline{s}$.

\subsubsection*{\bf Type SM}

If the marginal trader is a seller and submits a market order hits $b_t$, then the best bid $b_{t+1}$ at the time $t+1$ will be $b_t-\beta(a_t-b_t)$, and the best ask $a_{t+1}$ at the next period will remain unchanged. Her decision is restricted to choosing $p_t$ to maximize her utility subject to $p_t\in\{b_t\}$, so the optimal choice is $p_t=b_t$. We have $b_{t+1}=(1+\beta)b_t-\beta a_t$ and $a_{t+1}=a_t$, or $\dd_{t+1}=S_4\dd_t$, where
\[
S_4=
\begin{pmatrix}
1+\beta & -\beta\\ 
0 & 1
\end{pmatrix}.
\]
After this trade of market order on the ask side, we have
\begin{equation}
a_{t+1}-b_{t+1}=(1+\beta)(a_t-b_t).
\end{equation}

\subsection{Switching Laws}\label{sec2.3}

From the analysis on the trading behaviors of different marginal traders, we can see how the bid-ask spread develops as a Markov process. Namely, a limit order will change the bid-ask spread from $s$ to $s/(1+\alpha)$ or $s/(1+\beta)$, while a market order will change it from $s$ to $(1+\alpha)s$ or $(1+\beta)s$. So any initial bid-ask spread $s$ will converge to the infinity after sufficiently many market orders, and any $s$ will converge to $0$ after infinitely many limit orders. But $s$ has a lower bound $\underline{s}$ and an upper bound $\overline{a}$ for all bid-ask pair in $W$. When $s$ is too close to its bounds, the market will make $s$ move away from them. Intuitively, we say there exist some kind of ``gravitational forces'' controlling the appearance of different types of orders in the market. Moreover, such forces may also generate corresponding switching rules between different types of traders. 

Noting that the force is related with the bid-ask spread, rather than directly with the bid-ask pair, we will focus on the bid-ask spread as an indicator for it. If the bid-ask spread is sufficiently large, the hidden force will attract limit-type traders, say BL and SL, and repel market-type traders, say BM and SM, (but don't necessarily reject them until the spread is very close to the upper bound $\overline{a}$). On the other hand, if the bid-ask spread is extremely small, then such a hidden force will attract market-type traders and repel limit-type traders, and it will reject limit-type traders once the spread is close enough to $\underline{s}$.

The bid-ask spread at the time $t$ is $s_t=a_t-b_t$, where $s_t\in[\underline{s},\overline{a}]$ for all $(b_t,a_t)\in W$. The market only accepts traders of type BM and type SM, when
\[
s_t<\min\{(1+\alpha)\underline{s},(1+\beta)\underline{s}\}=(1+\min\{\alpha,\beta\})\underline{s},
\]
otherwise, $s_{t+1}<\underline{s}$ if it accepts limit-type traders. The market attracts traders of type BL and type SL, but may also accept traders of type BM and type SM, when
\[
s_t\geq(1+\max\{\alpha,\beta\})\underline{s}.
\]
In the remaining interval of $s_t$, say,
\[
(1+\min\{\alpha,\beta\})\underline{s}\leq s_t<(1+\max\{\alpha,\beta\})\underline{s}, 
\]
which types of traders will be attracted by the market depend on the ratio of $\alpha$ and $\beta$. If $\alpha>\beta$, or $\alpha/\beta\in(1,\infty)$, the market will only accept the trader of type BL. If $\beta>\alpha$, or $\alpha/\beta\in(0,1)$, the market will only accept the trader of type SL. If $\alpha=\beta$, or $\alpha/\beta=1$, then $\min\{\alpha,\beta\}=\max\{\alpha,\beta\}$, and hence such an interval does not exist. 

In sum, the possible types of marginal traders in a market with a bid-ask spread $s_t\in[\underline{s},(1+\min\{\alpha,\beta\})\underline{s})$ are BM and SM, while BL and SL would be rejected by the market. A market with a spread $s_t\in[(1+\min\{\alpha,\beta\})\underline{s},\underline{a}]$ would prefer traders of type BL and/or type SL to the type BM and type SM. 

We can disregard the interval between $(1+\alpha)\underline{s}$ and $(1+\beta)\underline{s}$, if
\[
\frac{\vert\alpha-\beta\vert\underline{s}}{\overline{a}-\underline{s}}=\frac{\vert\alpha-\beta\vert}{\overline{a}/\underline{s}-1}\approx 0.
\]
Since $\overline{a}\gg\underline{s}$ in most normal limit order markets, and $\alpha,\beta\in(0,1)$, that condition will be always satisfied. As a result, we assume $\alpha=\beta$, which exactly makes such an intermediary interval disappear. We can then consider a simplified but again general enough limit order book, in which we have
\[
[\underline{s},\overline{a}]=[\underline{s},(1+\alpha)\underline{s})\cup[(1+\alpha)\underline{s},\overline{a}].
\]
In that limit order market, the spread $s_t$ will be updated to $s_{t+1}=s_t/(1+\alpha)$ by a marginal trader of type BL or SL, or updated to $s_{t+1}=s_t(1+\alpha)$ by a marginal trader of type BM or SM. If the spread moves into $[\underline{s},(1+\alpha)\underline{s})$, it will move towards $[(1+\alpha)\underline{s},\overline{a}]$ in the following periods. If the spread is very close to $\overline{a}$, it will bounce away immediately. In general, we establish the following result on the capacity of limit-type traders at different regions of the order book.

\begin{pro}\label{pro2}
The maximal number of limit-type traders, who can be continuously accepted by a limit order market with a spread $s$, is determined by the floor function
\[
z(s)=\left\lfloor\frac{\log s-\log\underline{s}}{\log(1+\alpha)}\right\rfloor.
\]
\end{pro}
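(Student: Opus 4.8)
The plan is to track the spread through a maximal run of consecutive limit-type traders and read off the number of admissible steps directly from the lower-bound constraint $s\geq\underline{s}$. In the simplified model with $\alpha=\beta$, each limit-type trader (of type BL or SL) acts on the spread by the single contraction $s\mapsto s/(1+\alpha)$, and by the switching law established just above, such a trader is admitted precisely when the pre-trade spread satisfies $s_t\geq(1+\alpha)\underline{s}$, equivalently when the resulting spread $s_{t+1}=s_t/(1+\alpha)$ stays at or above $\underline{s}$. This is the only ingredient I need.

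First I would set up the iteration. Writing $s_0=s$ for the initial spread and letting $s_k$ denote the spread after $k$ consecutive limit orders, a trivial induction gives $s_k=(1+\alpha)^{-k}s$. The goal is then to identify the largest $n$ for which the run of $n$ trades is entirely admissible, that is, for which the $k$-th trade is accepted for every $k=1,\dots,n$.

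Next I would reduce the $n$ separate admissibility conditions to a single binding one. The $k$-th trade requires $s_{k-1}\geq(1+\alpha)\underline{s}$; since the sequence $(s_k)$ is strictly decreasing, we have $s_{k-1}\geq s_{n-1}$ for all $k\leq n$, so the whole system of constraints collapses to the tightest (last) one, namely $s_{n-1}\geq(1+\alpha)\underline{s}$, or equivalently $s_n=(1+\alpha)^{-n}s\geq\underline{s}$. Rearranging yields $(1+\alpha)^n\leq s/\underline{s}$; taking logarithms and using $\log(1+\alpha)>0$ gives $n\leq(\log s-\log\underline{s})/\log(1+\alpha)$. The maximal admissible integer $n$ is therefore the floor of the right-hand side, which is exactly $z(s)$.

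The computation is routine, so the only points that deserve care are the monotonicity reduction, where one must check that the final, smallest spread is indeed the binding constraint so that no earlier trade is the obstruction, and the boundary case in which $(\log s-\log\underline{s})/\log(1+\alpha)$ happens to be an integer. In that case the last spread lands exactly on $\underline{s}$ and must still be counted as admissible; this is consistent with the non-strict inequality $s_t\geq(1+\alpha)\underline{s}$ in the switching law, and hence with taking the floor rather than truncating more aggressively. I expect this off-by-one check to be the only genuinely delicate step.
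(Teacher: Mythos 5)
Your proof is correct, and it reaches the formula by a more direct route than the paper. The paper proceeds by induction on the interval index: it partitions the spread axis into the intervals $[(1+\alpha)^j\underline{s},(1+\alpha)^{j+1}\underline{s})$, shows inductively that the capacity $z(s)$ equals the index $j(s)$ via the recursion $z(s)=z(s/(1+\alpha))+1$, and only then converts the sandwich $(1+\alpha)^{z(s)}\underline{s}\leq s<(1+\alpha)^{z(s)+1}\underline{s}$ into the floor expression. You instead write the run in closed form, $s_k=(1+\alpha)^{-k}s$, observe via monotonicity that the system of admissibility constraints collapses to the single binding one $s_n\geq\underline{s}$, and solve that inequality for the maximal integer $n$. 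Both arguments rest on exactly the same ingredients --- the contraction $s\mapsto s/(1+\alpha)$ and the switching-law threshold $s\geq(1+\alpha)\underline{s}$, with the same non-strict inequality governing the off-by-one case you rightly flag --- so the substance is shared, but your version buys brevity and makes the boundary case where $\log(s/\underline{s})/\log(1+\alpha)$ is an integer completely transparent. One small point of difference: the paper's induction also explicitly treats initial spreads in the top band $[(1+\alpha)^{n+1}\underline{s},\overline{a}]$, where the interval partition is truncated by the upper bound $\overline{a}$; you silently ignore $\overline{a}$, which is harmless here since limit orders only shrink the spread, so the upper bound can never be the binding obstruction, but a one-line remark to that effect would make your argument fully self-contained within the paper's setup.
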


\begin{proof}
Define a sequence of consecutive intervals, namely, $[(1+\alpha)^i\underline{s},(1+\alpha)^{i+1}\underline{s})$, where $i\in\{0,1,\dotsc,n\}$, and 
\[
n=\max\{i\in\Z:(1+\alpha)^i\leq\overline{a}\}-1.
\]
For all $s\in[\underline{s},(1+\alpha)^{n+1}\underline{s})$, there exists a unique $j(s)\in\{0,1,\dotsc,n\}$ such that
\[
s\in[(1+\alpha)^{j(s)}\underline{s},(1+\alpha)^{j(s)+1}\underline{s}).
\]

We want to show that $z(s)=j(s)$ by induction. If $j(s)=0$, then $s\in[\underline{s},(1+\alpha)\underline{s})$, and the limit order market will reject the limit-type order, so $z(s)=0$. Assume $z(s)=j(s)$ is true for all $j(s)\leq k$, and consider $j(s)=k+1$ such that $s\in[(1+\alpha)^{k+1}\underline{s},(1+\alpha)^{k+2}\underline{s})$. After a marginal trader of type BL or type SL comes to the market, $s$ will be updated to $s'=s/(1+\alpha)\in[(1+\alpha)^k\underline{s},(1+\alpha)^{k+1}\underline{s})$. By the assumption, we know $z(s')=k$, so $z(s)=z(s')+1=k+1=j(s)$. 

If $s\in[(1+\alpha)^{n+1}\underline{s},\overline{a}]$, we have $s\in[(1+\alpha)^{n+1}\underline{s},(1+\alpha)^{n+2}\underline{s})$, as $(1+\alpha)^{n+2}\underline{s}>\overline{a}$. Here $j(s)=n+1$, so we have $z(s)=n+1=j(s)$ by induction, simply as $z(s')=j(s')$ for all $j(s')=n$.

Therefore, $z(s)$ satisfies
\[
(1+\alpha)^{z(s)}\underline{s}\leq s<(1+\alpha)^{z(s)+1}\underline{s},
\]
and thus we have
\[
z(s)\leq\frac{\log s-\log\underline{s}}{\log(1+\alpha)},\ \text{and}\ z(s)+1>\frac{\log s-\log\underline{s}}{\log(1+\alpha)},
\]
which exactly define the floor function.
\end{proof}

We can alternatively state that there is an exponential law in the relationship between the bid-ask spread $s$ and the market's capacity of accepting limit-type traders $n$,
\begin{equation}
s=(1+\alpha)^n\underline{s},\ \text{where}\ n\in[z(s),z(s)+1).
\end{equation}

An intuitive but nature remark from this result is that the probability distribution function $f(s)$ of the appearance of limit-type traders is positively linear in $\log s$, namely, 
\[
f(s)=k_1\log s+k_2,\ \text{if}\ s\in[(1+\alpha)\underline{s},\overline{a}],
\]
where $k_1>0$ and $k_2$ are constants depending on the values of $\underline{s},\overline{a},\alpha$, and
\[
f(s)=0,\ \text{if}\ s\notin[(1+\alpha)\underline{s},\overline{a}].
\]

Suppose there exists an interval $((1-\gamma)\overline{a},\overline{a}]$, where $0<\gamma<1$ and $(1-\gamma)(1+\alpha)\leq 1$\footnote{This inequality gives a necessary condition for a zero-capacity of the limit order market in accepting market-type orders. If we assume $(1-\gamma)(1+\alpha)=1$, there would be less room for interesting analysis.}, or equivalently $\alpha/(1+\alpha)\leq\gamma<1$, under which the market will never accept any more market-type trader. We can then state a result on the capacity of market-type traders in the market, whose unwritten proof is similar to that of Proposition \ref{pro2}. 

\begin{pro}
The maximal number of market-type traders, who can be continuously accepted by a limit order market with a spread $s$, is determined by the function,
\[
y(s)=\left\lfloor\frac{\log((1-\gamma)\overline{a})-\log s}{\log(1+\alpha)}+1\right\rfloor^+,
\]
where $\lfloor x\rfloor^+=\max\{\lfloor x\rfloor,0\}$. 
\end{pro}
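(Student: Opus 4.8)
The plan is to mirror the proof of Proposition \ref{pro2}, running the dynamics in the opposite direction: a market-type trader multiplies the spread by $(1+\alpha)$ rather than dividing by it, and the rejecting region now sits at the top of the admissible range instead of the bottom. Write $c=(1-\gamma)\overline{a}$ for the threshold, so that the market rejects every market-type order precisely when $s\in(c,\overline{a}]$ and accepts one when $s\le c$. First I would record the elementary consequence of the standing hypothesis $(1-\gamma)(1+\alpha)\le 1$, namely $(1+\alpha)c\le\overline{a}$: this guarantees that a market order issued from the accepting region $[\underline{s},c]$ cannot push the spread past $\overline{a}$, so the image of the accepting region under $s\mapsto(1+\alpha)s$ lands inside $(c,\overline{a}]$ whenever the source spread exceeds $c/(1+\alpha)$.

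Next I would set up the partition paralleling the one in Proposition \ref{pro2}, but descending from $c$: for $i\ge 0$ let
\[
I_i=\bigl(c(1+\alpha)^{-(i+1)},\,c(1+\alpha)^{-i}\bigr],
\]
so that the $I_i$ tile $(0,c]$ and, because $s\ge\underline{s}>0$, only finitely many of them meet the admissible range. For $s\in I_i$ one checks directly that $(1+\alpha)^{i}s\le c<(1+\alpha)^{i+1}s$; since a market order is accepted exactly while the current spread stays $\le c$, the spread survives $i$ further doublings and the $(i+1)$-th order is the last one carrying it into the rejecting zone. The claim to prove by induction on $i$ is then that $y(s)=i+1$ on $I_i$, together with $y(s)=0$ on $(c,\overline{a}]$.

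The induction runs exactly as before. The base case is the rejecting zone $s\in(c,\overline{a}]$, where the market declines the order and $y(s)=0$. For the step, take $s\in I_{i+1}$; a single market order is accepted (since $s\le c(1+\alpha)^{-(i+1)}\le c$) and sends the spread to $s'=(1+\alpha)s\in I_i$, so by the inductive hypothesis $y(s)=1+y(s')=1+(i+1)=i+2$; the $i=0$ case uses the preliminary remark to land $s'$ in $(c,\overline{a}]$ rather than outside the model. Finally I would translate the interval membership $s\in I_i$ into the bound
\[
i\le\frac{\log c-\log s}{\log(1+\alpha)}<i+1,
\]
whence $\lfloor(\log c-\log s)/\log(1+\alpha)+1\rfloor=i+1=y(s)$, while on $(c,\overline{a}]$ one has $s>c$, so $(\log c-\log s)/\log(1+\alpha)<0$, the argument of the floor is strictly below $1$, and the clamp $\lfloor\cdot\rfloor^+$ returns $0$, matching $y=0$.

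I expect the only real subtlety — rather than a genuine obstacle — to be the boundary bookkeeping: keeping the half-open intervals consistent so that $s=c$ is accepted (this gives $y(c)=1$ and is the source of the ``$+1$'' shift relative to $z(s)$), and pinning down the exact role of the hypothesis $(1-\gamma)(1+\alpha)\le 1$. That hypothesis is what makes the base step of the induction consistent: it guarantees $(1+\alpha)c\le\overline{a}$, so a market order issued from the top interval $I_0$ lands inside the admissible rejecting zone $(c,\overline{a}]$ rather than past $\overline{a}$ and out of the model $W$. This is precisely the sense in which it is the right condition for the zero-capacity of the rejecting region, and it is the one place where care is needed to keep the combinatorial count faithful to the underlying dynamics.
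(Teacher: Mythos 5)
Your proposal is correct and is essentially the proof the paper intends: the paper leaves this proof unwritten, noting only that it is ``similar to that of Proposition \ref{pro2},'' and your mirrored induction on the descending geometric intervals $I_i=\bigl(c(1+\alpha)^{-(i+1)},c(1+\alpha)^{-i}\bigr]$ with $c=(1-\gamma)\overline{a}$ is exactly that argument run in reverse. You also handle the two genuine boundary points correctly --- closing the intervals at the top so that $s=c$ is accepted (yielding $y(c)=1$ and explaining the ``$+1$'' relative to $z(s)$), and using $(1-\gamma)(1+\alpha)\le 1$ to keep the post-acceptance spread inside $(c,\overline{a}]$ --- so nothing is missing.
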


There is also an exponential law in the relationship between $s$ and the market's capacity of accepting market-type traders $n$,
\begin{equation}
s\propto(1+\alpha)^{-n}\overline{a},\ \text{where}\ n\in[y(s),y(s)+1).
\end{equation}

The probability distribution function $g(s)$ of the appearance of market-type traders is negatively linear in $\log s$, say,
\[
g(s)=-k_3\log s+k_4,\ \text{if}\ s\in[\underline{s},(1-\gamma)\overline{a}],
\]
where $k_3>0$ and $k_4$ are again constants determined by $\underline{s},\overline{a},\alpha,\gamma$, and 
\[
g(s)=0,\ \text{if}\ s\notin[\underline{s},(1-\gamma)\overline{a}].
\]

Now we are prepared to consider the switching laws between different types of traders in any two consecutive periods. If $s\in[\underline{s},(1+\alpha)\underline{s})$, which means that the market has accepted too many limit-type traders, then the acceptable marginal trader in the next period will switch into the market-type for sure. If $s\in((1-\gamma)\overline{a},\underline{a}]$, then the acceptable marginal trader will switch from the market-type to the limit-type. Assume the switching probabilities as a function of the bid-ask spread are continuous over $[\underline{s},\overline{a}]$. The probability of switching from limit-type to market-type is $1$, if $s\in[\underline{s},(1+\alpha)\underline{s})$, and it is $0$, if $s\in((1-\gamma)\overline{a},\overline{a}]$. The probability of switching from limit-type to market-type is decreasing from $1$ to $0$ on the domain $[(1+\alpha)\underline{s},(1-\gamma)\overline{a}]$. On the other hand, the probability of switching from market-type to limit-type is increasing on the same domain. There also exist similar switching laws between buy-type (type BM and type BL) and sell-type (type SM and type SL), which will be studied in Section \ref{sec3.2}.

\section{Iterated Trading Process}\label{sec3}

\subsection{Sequential Trading}\label{sec3.1}

Define four linear functions mapping from $W$ into itself,
\[
f_i(\dd)=S_i\dd,\ i\in\{1,2,3,4\},
\]
where $S_1,S_2,S_3,S_4$ are $2\times 2$ matrices as defined in Section \ref{sec2.2}. There is an equivalent function $g_i(\ww)=\ww S_i'$, such that $g_i(\ww)=f'_i(\dd)$ if $\ww=\dd'$, where $S_i'$ is the transpose of $S_i$ for all $i$. Here, we again assume that the parameters $\alpha$ and $\beta$ in $S_1,S_2,S_3,S_4$ are equal, so $\beta$ will be denoted by $\alpha$ equivalently. Let $F$ be the collection of these four functions, namely, $F=\{f_1,f_2,f_3,f_4\}$. 

For each $i\in\{1,2,3,4\}$, and any given $\dd\in W$, we define a convex set,
\[
L_i(\dd)=\{\lambda\dd+(1-\lambda)S_i\dd:0\leq\lambda\leq1\}.
\]
So $L_i(\dd)$ is actually the line segment between $\dd$ and $S_i\dd$ in the $b$-$a$ plane. We can now give the definition of $f_i(\dd)$ more precisely, say,
\[
f_i(\dd)\in\bigg\{
\begin{aligned}
\{S_i\dd\}\quad\, &,\ \text{if}\ S_i\dd\in W\\
L_i(\dd)\cap\partial W &,\ \text{if}\ S_i\dd\notin W
\end{aligned}\ ,
\]
where $\partial W$ is the boundary of the closed domain $W$. Since both $\{S_i\dd\}$ and $L_i(\dd)\cap\partial W$ are singletons, $f_i(\dd)$ takes either the value of $S_i\dd$ or the unique element of $L_i(\dd)\cap\partial W$, hence it is well-defined as a function. By the definition of $f_i$, the domain $W$ then has an absorbing barrier, so that the dynamics of $f_i$ will be restricted within $W$. If a bid-ask pair touches $\partial W$ at $\dd$, then $f_i(\dd)=\dd$ as $L_i(\dd)\cap\partial W=\{\dd\}$, and hence it will be absorbed at $\partial W$. 

\begin{dfn}
If a limit order market with its bid-ask pair absorbed on $\partial W$ at a time $t$, the market after the time $t$ is said to be in a \markdef{crash}.
\end{dfn}

Consider a permutation function $\sigma:\{1,2,3,4\}\to\{\BL,\BM,\SL,\SM\}$, such that $\sigma(1)=\BL$, $\sigma(2)=\BM$, $\sigma(3)=\SL$, and $\sigma(4)=\SM$.

\begin{dfn}
The discrete dynamical system $(W,f_i)$ is called a \markdef{trading system} generated by a marginal trader of type $\sigma(i)$, where $i\in\{1,2,3,4\}$.
\end{dfn}

Each trading system $(W,f_i)$ produces a certain dynamics of bid-ask pairs in the domain $W$, where $i\in\{1,2,3,4\}$. Given any initial condition $\ww\in W$, the linear dynamics is quite clear, in which the bid-ask pair will eventually hit the point $\ww_i\in\partial W$ in the trading system $(W,f_i)$. Concretely, for any initial state $\ww=(b,a)\in W$, we have
\[
\ww_1=(a-\underline{s},a),\ \ww_2=(b,\overline{a}),\ \ww_3=(b,b+\underline{s}),\ \text{and}\ \ww_4=(0,a).
\] 
We want to state a generalized result based on this fact.

\begin{pro}\label{pro1}
If a same type marginal trader repeatedly comes to a market, then the market starting from any initial bid-ask pair in $W$ is unstable.
\end{pro}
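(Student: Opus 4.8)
The plan is to reduce the two-dimensional dynamics of $\dd_t$ to the scalar dynamics of the spread $s_t=a_t-b_t$, which decouples cleanly under each $f_i$. The spread recursions derived in Section~\ref{sec2.2} show that, on the interior, every map acts on the spread as a pure scaling: a limit-type step (BL or SL) sends $s\mapsto s/(1+\alpha)$ with factor $1/(1+\alpha)<1$, while a market-type step (BM or SM) sends $s\mapsto(1+\alpha)s$ with factor $1+\alpha>1$ (recall $\beta=\alpha$). Since this factor is never $1$, the spread admits no stationary value in the interior; and since $W$ is compact, with the spread ranging only over $[\underline{s},\overline{a}]$, the monotone geometric orbit of $s_t$ cannot stay in the interior and must drive the pair onto $\partial W$.

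First I would handle the limit types. Iterating $f_i$ for $i\in\{1,3\}$ gives $s_t=s_0/(1+\alpha)^t\to 0$, so there is a first index at which the unconstrained image would violate $s\geq\underline{s}$; at that step the segment $L_i(\dd)$ joining the current point to $S_i\dd$ crosses the edge $\{a-b=\underline{s}\}$, and the absorbing definition of $f_i$ pins the orbit there — at $\ww_1=(a-\underline{s},a)$ for BL, which fixes $a$, and at $\ww_3=(b,b+\underline{s})$ for SL, which fixes $b$. Next I would handle the market types. Iterating $f_i$ for $i\in\{2,4\}$ gives $s_t=(1+\alpha)^t s_0\to\infty$, but the orbit fixes $b$ and raises $a$ to the cap $\overline{a}$ under BM, and fixes $a$ and lowers $b$ to $0$ under SM, so it is absorbed at $\ww_2=(b,\overline{a})$, respectively $\ww_4=(0,a)$. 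In all four cases absorption on $\partial W$ occurs after finitely many steps, from every $\ww\in W$; this both recovers the limit points $\ww_i$ stated before the proposition and sharpens the statement to finite-time absorption.

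To turn absorption on $\partial W$ into \emph{instability}, I would record two structural facts. A direct computation shows that the fixed-point set of each linear map $S_i$ is the diagonal $\{a=b\}$, which is disjoint from $W$ because $a-b\geq\underline{s}>0$ throughout $W$; hence $(W,f_i)$ has no equilibrium in the interior. Combined with the previous step, every orbit leaves the interior and is absorbed on $\partial W$, so by the definition of a crash the market is in a crash after finitely many identical trades. No initial bid-ask pair can therefore sustain a viable interior state: the system possesses no stable interior equilibrium and every orbit terminates in a crash, which is exactly the instability asserted.

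The main obstacle is conceptual rather than computational. The orbits in fact \emph{converge}, each to its point $\ww_i$, so the dynamics superficially look stable, and the real work is to fix the correct meaning of stability. The resolution is that these limits lie on $\partial W$ and there is no interior fixed point whatsoever, so the convergence here is convergence to a crash, not to a usable interior equilibrium; accordingly I would formalise instability as the absence of an asymptotically stable equilibrium in the interior of $W$ together with guaranteed absorption on $\partial W$. A minor secondary point is to treat initial data already on $\partial W$ that are not absorbing for the given $S_i$ (for instance a point on the top edge under $f_1$), which the same spread argument carries immediately to the appropriate $\ww_i$.
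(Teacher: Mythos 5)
Your proof is correct, and it reaches exactly the paper's conclusion (absorption at $\ww_1=(a-\underline{s},a)$, $\ww_2=(b,\overline{a})$, $\ww_3=(b,b+\underline{s})$, $\ww_4=(0,a)$, hence a crash), but by a different computational route. The paper works directly with the two-dimensional linear dynamics: it computes the powers $S_i^n$ in closed form, observes that $S_1^n\dd_t\to(a_t,a_t)'$, $S_2^n\dd_t\to(b_t,\infty)'$, $S_3^n\dd_t\to(b_t,b_t)'$, $S_4^n\dd_t\to(-\infty,a_t)'$, and concludes that since the orbit must stay in $W$, it is absorbed on $\partial W$ at the stated points. You instead decouple the scalar invariant $s_t=a_t-b_t$, which each map scales by $(1+\alpha)^{\pm1}\neq 1$, and combine this with the observation that each $f_i$ fixes one coordinate ($a$ under BL, $b$ under BM and SL, $a$ under SM); monotone geometric escape from the compact range of the spread then forces boundary absorption. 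Your route is more elementary (no matrix powers needed) and yields two genuine sharpenings the paper does not state: absorption in \emph{finite} time rather than asymptotic convergence, and the structural fact that the fixed-point set of each $S_i$ is the diagonal $\{a=b\}$, disjoint from $W$, which cleanly pins down what ``unstable'' means here --- the paper leaves that notion implicit, simply equating instability with the eventual crash. The paper's explicit $S_i^n$ computations, for their part, give the full transient trajectory, not just the spread, which is reused nowhere but makes the limit points immediately visible. One small caveat on your final remark: under the paper's absorbing convention, a point already on $\partial W$ satisfies $f_i(\dd)=\dd$ and stays put, so an initial datum on, say, the top edge under $f_1$ is \emph{not} carried onward to $\ww_1$ --- it is already in a crash where it sits. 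This does not affect the proposition (such a market is in a crash from the outset, hence trivially unstable), but your phrasing ``carries immediately to the appropriate $\ww_i$'' conflicts with the paper's definition of $f_i$ on the boundary.
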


\begin{proof}
Note that for any initial condition $\dd_t=(b_t,a_t)'$, after $n$ forward periods with the marginal trader of the same type $\sigma(1)$, the bid-ask pair will be $S_1^n\dd_t$ that converges to $(a_t,a_t)'$ if $n$ is sufficiently large, where
\[
S_1^n=
\begin{pmatrix}
\frac{1}{(1+\alpha)^n} & \frac{\alpha\sum_{i=0}^{n-1}(1+\alpha)^i}{(1+\alpha)^n}\\
0 & 1
\end{pmatrix}
=
\begin{pmatrix}
\frac{1}{(1+\alpha)^n} & 1-\frac{1}{(1+\alpha)^n}\\
0 & 1
\end{pmatrix}
\to
\begin{pmatrix}
0 & 1\\
0 & 1
\end{pmatrix}.
\]
But in the trading system $(W,f_1)$, the bid-ask pair should always stay in $W$, and hence the last bid-ask pair remains in the system is $(a_t-\underline{s},a_t)\in\partial W$. Thus the trading system monotonically moves to a crash, and the market will then stop.

The same happens to the trading system $(W,f_i)$, where $i\in\{2,3,4\}$, as
\[
S_2^n=
\begin{pmatrix}
1 & 0\\
1-(1+\alpha)^n & (1+\alpha)^n
\end{pmatrix}
\to
\begin{pmatrix}
1 & 0\\
-\infty & \infty
\end{pmatrix},
\]
which implies $S_2^n\dd_t\to(b_t,\infty)'$,
\[
S_3^n=
\begin{pmatrix}
1 & 0\\
1-\frac{1}{(1+\alpha)^n} & \frac{1}{(1+\alpha)^n}\\
\end{pmatrix}
\to
\begin{pmatrix}
1 & 0\\
1 & 0
\end{pmatrix},
\]
which implies $S_3^n\dd_t\to(b_t,b_t)'$, and finally
\[
S_4^n=
\begin{pmatrix}
(1+\alpha)^n & 1-(1+\alpha)^n\\
0 & 1
\end{pmatrix}
\to
\begin{pmatrix}
\infty & -\infty\\
0 & 1
\end{pmatrix},
\] 
which implies $S_4^n\dd_t\to(-\infty,a_t)'$. Thus $(W,f_2)$, $(W,f_3)$, and $(W,f_4)$ will be in a crash after achieving $(b_t,\overline{a})$, $(b_t,b_t+\underline{s})$, and $(0,a_t)$, respectively.
\end{proof}

In Proposition \ref{pro1}, we actually consider a sequence of traders with a constant type, say $\{q,q,\dotsc\}$, where $q\in\{\sigma(1),\sigma(2),\sigma(3),\sigma(4)\}$. The marginal trader of type $q$ comes to the trading system $(W,f_i)$, where $i=\sigma^{-1}(q)$ is determined by the permutation scheme $\sigma$. We state that the trading system determined by $\{q,q,\dotsc\}$ is not stable, in the sense that it will crash at the boundary of the domain $W$.

Now we define a general sequence of marginal traders with different types starting from time $t$ as $\{q_t,q_{t+1},\dotsc\}$, where $q_{t+i}\in\{\sigma(1),\sigma(2),\sigma(3),\sigma(4)\}$ for all $i\in\Z_+$. Given a trader's type $q=\sigma(i)$ at the time $t$, the atomic trading scheme at $t$ will be determined by $f_i$, for all $i\in\{1,2,3,4\}$.

\begin{dfn}
The iterated function system $(W,F)$ is called an \markdef{iterated trading system} generated by a sequence of traders, such that the atomic trading scheme is $f_i$ if the trader in the sequence is of type $\sigma(i)$. Moreover, we denote the iterated trading system as a triplet $(W,F,\sigma)$.
\end{dfn}

Sometimes we also refer to $(W,F,\sigma)$ as a \markdef{dynamical trading system}. Recall that $(W,f_i)$ is a trading system, so $(W,\{f_i\},\sigma)$ is trivially an iterated trading system, which is exactly equivalent with the trading system $(W,f_i)$ for all $i\in\{1,2,3,4\}$. Note that each $f_i\in F$ has a common parameter $\alpha$, so the iterated trading system $(W,F,\sigma)$ also depends on $\alpha$.

First of all, we are interested in identifying stable components, which do not change any bid-ask pair in the iterated trading system $(W,F,\sigma)$. Notice that $S_1=S_4^{-1}$, and $S_2=S_3^{-1}$ for all $\alpha\in(0,1)$, so $S_1S_4=S_2S_3=I$, where $I$ is the identity matrix of order $2$, and hence
\[
f_1\circ f_4=f_4\circ f_1=\id_W,\ \text{and}\ f_2\circ f_3=f_3\circ f_2=\id_W,
\]
where $\id_W$ is the identity function on $W$. Therefore, $\{\sigma(1),\sigma(4)\}$, $\{\sigma(4),\sigma(1)\}$, $\{\sigma(2),\sigma(3)\}$, and $\{\sigma(3),\sigma(2)\}$ are all stable components in any sequence of marginal traders for all $\alpha\in(0,1)$. In general, such stable components are called periodic blocks.

\begin{dfn}
A \markdef{periodic block} is a consecutive component in a given sequence of traders, which does not change any bid-ask pair in a specific iterated trading system.
\end{dfn}

Note that any combination of periodic blocks is again a periodic block. For instance, $\{\sigma(1),\sigma(4),\sigma(2),\sigma(3)\}$ is a periodic block for all $\alpha\in(0,1)$, as $\{\sigma(1),\sigma(4)\}$ and $\{\sigma(2),\sigma(3)\}$ are general periodic blocks. So we need to catch the kernel of a periodic block, such that it is invariant, namely, its kernel should be itself.

\begin{dfn}
A periodic block is \markdef{minimal}, if it has no proper subtuple that is again a periodic block.
\end{dfn}

Any periodic block can be reduced into a series of minimal ones. Note that a periodic block $C$ is either minimal or not minimal. If $C$ is minimal, it is equivalent with itself. If $C$ is not minimal, we can always find a proper subtuple $C'\subset C$ such that $C'$ and $C\setminus C'$ are periodic blocks. We can eventually have a series of minimal periodic blocks by applying this partition process recursively. 

\begin{emp}
If $\alpha\in(0,1)$, the periodic block $\{\sigma(1),\sigma(2),\sigma(3),\sigma(4)\}$ has two minimal periodic blocks, namely, $\{\sigma(2),\sigma(3)\}$ and $\{\sigma(1),\sigma(4)\}$, and the periodic block
\[
\{\sigma(4),\sigma(1),\sigma(1),\sigma(4),\sigma(4),\sigma(1)\}
\]
also has two minimal periodic blocks, namely, $\{\sigma(4),\sigma(1)\}$ and $\{\sigma(1),\sigma(4)\}$. 

If $\alpha=1/2$,
\[
\{\sigma(2),\sigma(1),\sigma(2),\sigma(1),\sigma(3),\sigma(4),\sigma(3),\sigma(4),\sigma(3),\sigma(4)\}
\]
is a periodic block, and it is minimal.

If $\alpha=1/3$,
\[
\{\sigma(1),\sigma(2),\sigma(1),\sigma(2),\sigma(1),\sigma(2),\sigma(1),\sigma(2),\sigma(4),\sigma(3),\sigma(4),\sigma(3),\sigma(4),\sigma(3)\}
\]
is a minimal periodic block.
\end{emp}

\begin{lem}\label{lem1}
The number of marginal traders in any minimal periodic block is finite and even.
\end{lem}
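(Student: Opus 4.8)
The plan is to recast this purely combinatorial statement as a statement about products of the matrices $S_1,S_2,S_3,S_4$. By definition a periodic block $(c_1,\dots,c_k)$ leaves every bid-ask pair in $W$ unchanged, so on the interior of $W$ (where the absorbing barrier $\partial W$ plays no role and each $f_{c_i}$ is genuinely linear) the composition $f_{c_1}\circ\cdots\circ f_{c_k}$ must be $\id_W$, i.e. $S_{c_1}\cdots S_{c_k}=I$. Since we already know $S_1S_4=S_4S_1=I$ and $S_2S_3=S_3S_2=I$, we have $S_4=S_1^{-1}$ and $S_3=S_2^{-1}$, so all four matrices lie in the group $G=\langle S_1,S_2\rangle\le\mathrm{GL}_2(\R)$, and a periodic block is exactly a word in $S_1,S_2,S_1^{-1},S_2^{-1}$ whose product is $I$.

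The even half I would settle by a one-line determinant count, which needs no minimality. From $\det S_1=\det S_3=(1+\alpha)^{-1}$ and $\det S_2=\det S_4=1+\alpha$, the relation $S_{c_1}\cdots S_{c_k}=I$ forces $(1+\alpha)^{N_{24}-N_{13}}=1$, where $N_{24}$ and $N_{13}$ count the letters drawn from $\{S_2,S_4\}$ and from $\{S_1,S_3\}$. Because $\alpha\in(0,1)$ gives $1+\alpha\ne 1$, we conclude $N_{24}=N_{13}$, and hence $k=N_{24}+N_{13}=2N_{13}$ is even. Intuitively this is just the balance condition that a block must contain as many spread-contracting trades (types $\BL,\SL$) as spread-expanding ones (types $\BM,\SM$).

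For finiteness I would first translate minimality into a statement about partial products. Setting $P_0=I$ and $P_i=S_{c_1}\cdots S_{c_i}$, a proper contiguous sub-block running from position $i+1$ to $j$ has product $P_i^{-1}P_j$, which equals $I$ precisely when $P_i=P_j$; hence a block is minimal if and only if $P_0,\dots,P_{k-1}$ are pairwise distinct (with $P_k=P_0=I$). Finiteness then reduces to showing that the first return of the partial products to $I$ happens at a finite index, i.e. that no infinite trader sequence can restore every bid-ask pair for the first time only in the limit. This is the step I expect to be the main obstacle, since $G$ is infinite (for instance $S_2^n\to\infty$), so a naive pigeonhole over all of $G$ fails. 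The clean resolution is that a genuine periodic block restores the pair after a finite act of trading, so its matrix product literally equals $I$ at a finite step, making any periodic block — and a fortiori any minimal one — a finite word. If one instead wants a length bound inside a realized trading system, I would use that the spread is confined to the finitely many grid levels $(1+\alpha)^{j}\underline{s}$ appearing in Proposition \ref{pro2}; this bounds the determinant exponents $\delta_i$ (with $\det P_i=(1+\alpha)^{\delta_i}$) and, together with the correspondingly discrete bid levels, pins each $P_i$ down to finitely many possibilities, so a repetition $P_i=P_j$ is forced, contradicting minimality unless $k$ is finite. Combining the two halves gives that a minimal periodic block consists of finitely many traders, in even number.
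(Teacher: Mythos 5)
Your determinant count for the ``even'' half is correct and is a genuinely different route from the paper's. The paper proves evenness combinatorially: it builds a graph on the visited bid-ask pairs, notes that a block of $2n+1$ traders contributes $2n+1$ links, and argues that a closed route forces an even link count at each node. Your argument --- $\det S_1=\det S_3=(1+\alpha)^{-1}$, $\det S_2=\det S_4=1+\alpha$, so $S_{c_1}\cdots S_{c_k}=I$ forces $N_{24}=N_{13}$ and $k=2N_{13}$ --- is cleaner and harder to object to, and it is really the same balance the paper is groping for: since $(-1,1)$ is a left eigenvector of each $S_i$ with eigenvalue exactly $\det S_i$, the determinant of a partial product \emph{is} the accumulated spread multiplier, so your identity says precisely that spread-contracting trades ($\BL$, $\SL$) and spread-expanding trades ($\BM$, $\SM$) must occur equally often. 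Like the paper's count, it uses no minimality. For the ``finite'' half, your fallback (b) is essentially the paper's own proof in matrix clothing: the paper shows the reachable set $\Lambda(\ww)$ from any $\ww\in W$ is finite (the spread is confined to the grid $(1+\alpha)^i s$ with bounded exponent because $W$ is bounded, and on each constant-spread line the reachable positions form a bounded lattice generated by shifts of size $\alpha r$ and $\alpha r/(1+\alpha)$), then derives a contradiction with minimality from a revisited state.

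There is, however, one genuine gap in your finiteness argument as written: the step from ``finitely many reachable states'' to ``a repetition $P_i=P_j$ is forced.'' Finiteness of $\Lambda(\ww)$ only yields $P_i\ww=P_j\ww$ for the single realized $\ww$, i.e.\ the sub-block from $i+1$ to $j$ fixes \emph{one} pair, whereas your own reformulation (and the paper's definition) requires a periodic block to fix \emph{every} pair, i.e.\ $P_i^{-1}P_j=I$; and your phrase ``pins each $P_i$ down to finitely many possibilities'' is not justified as stated, since bounding the determinant exponents $\delta_i$ still leaves infinitely many candidate matrices. The missing line is this: every generator $S_i$ fixes the diagonal $b=a$ pointwise, so $(1,1)'$ is a common fixed vector of all partial products; since any $\ww\in W$ satisfies $a-b\geq\underline{s}>0$, the pair $\{\ww,(1,1)'\}$ is a basis of $\R^2$, and hence $P_i\ww=P_j\ww$ together with $P_i(1,1)'=P_j(1,1)'=(1,1)'$ forces $P_i=P_j$. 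With that inserted, your pigeonhole closes --- and it incidentally patches the same tacit step in the paper's proof, where a ``closed route'' is asserted without comment to yield a periodic sub-block. Finally, resist your preferred ``clean resolution'' (a): declaring that a periodic block is a finite word by fiat trivializes half of the lemma. The content of the paper's statement is exactly that an infinite trader sequence cannot be a \emph{minimal} periodic block, and minimality is what the proof consumes there (an infinite block over a finite state set revisits a state, producing a proper periodic sub-block); your fallback (b), suitably repaired, is the argument that carries this weight, while evenness needs no minimality at all.
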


\begin{proof}
Consider a minimal periodic block $C$, and assume the number of traders in $C$ is infinite. Then $C$ must pass infinite bid-ask pairs. If not, we suppose $C$ passes finite bid-ask pairs. Since the number of traders in $C$ is infinite, there must exist a closed route, such that a related subset of $C$ is a periodic block, which contradicts that $C$ is minimal. 

Assume the initial condition of $C$ is $\ww\in W$ with a bid-ask spread $s$. Note that $s$ can be updated into either $(1+\alpha)s$ or $s/(1+\alpha)$, so any bid-ask pair on the trajectory will have a spread in the set
\[
S_\ww=\{(1+\alpha)^i s:-N_1\leq i\leq N_2\ \text{and}\ i\in\Z\},
\]
where $N_1,N_2\in\Z_+$ are finite, since $W$ is bounded. Let the bid-ask pair $\ww_r$ be the first state with a spread $r$ on the trajectory starting from $\ww$, for all $r\in S_\ww$, where $\ww_r=\ww$, if $r=s$. $\ww_r$ will be updated to $\ww_r\pm(\alpha r/(1+\alpha),\alpha r/(1+\alpha))$ by the block $\{\sigma(1),\sigma(2)\}$ or $\{\sigma(3),\sigma(4)\}$, and to $\ww_r\pm(\alpha r,\alpha r)$ by the block $\{\sigma(2),\sigma(1)\}$ or $\{\sigma(4),\sigma(3)\}$. So at the constant-spread line $a-b=r$, all the possible states on the trajectory have the form,
\[
\ww_r+k_1(\alpha r,\alpha r)+k_2\left(\frac{\alpha}{1+\alpha} r,\frac{\alpha}{1+\alpha} r\right)=\ww_r+\left(k_1\alpha+\frac{k_2\alpha}{1+\alpha},k_1\alpha+\frac{k_2\alpha}{1+\alpha}\right)r,
\]
where $k_1,k_2\in\Z$ and they are finite, as $W$ is bounded. Since $\alpha,k_1,k_2$ are finite, all the possible states with a given spread $r$ on the trajectory are finite, and hence all the states in $W$ starting from $\ww$ are finite. So $C$ can not pass infinite bid-ask pairs, which implies the number of marginal traders in $C$ must be finite.

Suppose $C$ has $2n+1$ traders, where $n\in\Z_+$, and assume it will pass $m$ different bid-ask pairs, the collection of which is denoted by the set $P$. By Proposition \ref{pro1}, any trader of type $\sigma(i)$ will definitely update the bid-ask pair in a trading system $(W,f_i)$, so $1<m<\infty$. Let $P$ be the set of nodes in a graph, so any trader in $C$ will link two different nodes in $P$. Since there are $2n+1$ traders, we have $2n+1$ links in this graph. But if there exists a directed circle, such that the bid-ask pair after this block will not be changed, then the number of links of any node in $P$ should be even. So the total links in this graph should be even, which contradicts that the number of links is $2n+1$. Therefore, a block with $2n+1$ traders can not be periodic, which completes the proof.
\end{proof}

Note that we can have an equivalent reduced sequence of traders by deleting (minimal) periodic blocks iteratively from any sequence of traders, as we just delete some closed routes of bid-ask pairs, which will not change the dynamics of an iterated trading system as a whole. 

\begin{dfn}
A sequence of marginal traders is \markdef{irreducible}, if it does not contain any minimal periodic block.
\end{dfn}

\begin{pro}\label{pro3}
A market accepts any irreducible sequence of traders is unstable.
\end{pro}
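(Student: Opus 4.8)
The plan is to argue by contradiction, using the same finiteness mechanism that drives the proof of Lemma \ref{lem1}. Fix an initial pair $\ww_0\in W$ and an irreducible sequence of traders $\{q_t,q_{t+1},\dotsc\}$, and suppose the resulting market is \emph{stable}, meaning it never crashes, so the trajectory of bid-ask pairs $\ww_0,\ww_1,\ww_2,\dotsc$ stays off $\partial W$ for all time (this is the negation of ``unstable'' in the sense of Proposition \ref{pro1}). Because no pair is ever absorbed on the boundary, each step acts by the genuine linear map $S_i$ rather than the truncated $f_i$, so the trajectory is an honest orbit of products of the $S_i$, and the spread-only bookkeeping applies exactly.

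The heart of the argument is a finiteness claim, which I would import almost verbatim from the proof of Lemma \ref{lem1}. Since each trader multiplies the spread by $(1+\alpha)$ or by $1/(1+\alpha)$ and $W$ is compact, every pair on the trajectory has spread in a finite set $\{(1+\alpha)^i s_0:-N_1\leq i\leq N_2,\ i\in\Z\}$. Along each constant-spread line $a-b=r$, the only admissible moves are the spread-preserving diagonal translations by $\pm(\alpha r,\alpha r)$ and $\pm(\frac{\alpha}{1+\alpha}r,\frac{\alpha}{1+\alpha}r)$, so the reachable positions there form a bounded one-dimensional lattice and are finite in number. Hence the whole reachable set inside $W$ is finite.

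Now I would combine finiteness with the pigeonhole principle. By stability the process never terminates, so an infinite sequence of indexed states is drawn from a finite set of pairs; therefore some pair recurs, say $\ww_j=\ww_k=\ww^\ast$ with $j<k$. The consecutive block $\{q_{j+1},\dotsc,q_k\}$ is then a closed route returning the system to $\ww^\ast$, \ie\ a periodic block in the sense used in Lemma \ref{lem1} and in the reduction remark. Since any periodic block contains a minimal one, the sequence contains a minimal periodic block, contradicting irreducibility. Thus stability is untenable, the trajectory must reach $\partial W$ and be absorbed, and the market is unstable.

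The main obstacle is lining up the definitions rather than the computation. I must read ``periodic block'' in the dynamical closed-route sense (start pair equals end pair), as in the proof of Lemma \ref{lem1}, rather than demanding the matrix product equal $I$; the recurrence $\ww_j=\ww_k$ supplies precisely the former, and one checks en passant that recurrence forces equal numbers of limit and market orders, so $\det$ of the block product is $1$ and the spread genuinely returns. I also have to be careful that the lattice/finiteness argument is valid only in the interior regime, which is exactly what the no-crash hypothesis guarantees, and to keep ``unstable'' meaning ``eventually absorbed on $\partial W$,'' matching its usage in Proposition \ref{pro1}.
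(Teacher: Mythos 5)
Your proof is correct and runs on the same engine as the paper's own proof --- the finiteness mechanism of Lemma \ref{lem1} together with the reduction of closed routes to minimal periodic blocks --- but you organize it as the contrapositive, and in one place your version is tighter than the original. The paper argues forward: any market functioning for infinitely many periods must contain minimal periodic blocks, justified only by the terse clause that otherwise it would be ``a minimal periodic block with infinite traders,'' contradicting Lemma \ref{lem1}; hence an irreducible sequence is finite, the market functions for finitely many periods, and the bid-ask pair is absorbed on $\partial W$. You instead suppose no crash ever occurs, import the finiteness of the reachable set $\Lambda(\ww)$ from the proof of Lemma \ref{lem1} (legitimately --- that part of the lemma's proof concerns arbitrary trajectories, and the paper itself reuses it this way in Proposition \ref{pro4}), and extract a recurrence $\ww_j=\ww_k$ by pigeonhole, hence a periodic block inside the sequence and a contradiction with irreducibility. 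Your pigeonhole step is precisely what the paper's weakest sentence needs: an infinite sequence free of minimal periodic blocks is not itself a periodic block (nothing forces it to return any state), so the paper's ``otherwise'' clause does not literally apply, whereas recurrence on a finite state set closes that gap cleanly. One further simplification: your hedge about reading ``periodic block'' as a closed route rather than as matrix product equal to $I$ is unnecessary, because all four matrices $S_1,\dotsc,S_4$ fix the vector $(1,1)'$, and your recurrent state $\dd^\ast=(b^\ast,a^\ast)'$ satisfies $a^\ast-b^\ast\geq\underline{s}>0$; the block's product therefore fixes two linearly independent vectors and must be the identity, so the closed route changes no bid-ask pair at all, matching the paper's definition verbatim and superseding your determinant remark. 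The only residual difference is cosmetic: the paper's phrasing also covers a finite irreducible sequence (the market then simply stops functioning, which it counts as a crash), a case your infinite-sequence framing treats implicitly.
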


\begin{proof}
Note that any market functioning for infinite periods must contain several minimal periodic blocks, otherwise it is a minimal periodic block with infinite traders, which contradicts Lemma \ref{lem1}. Since any irreducible sequence of marginal traders contains no minimal periodic block, the number of marginal traders in any irreducible sequence must be finite, otherwise we have a market functioning with infinite periods has no minimal periodic block. If the number of marginal traders in a sequence is finite, then the market must function only for finite periods. So the bid-ask pair in the market must be absorbed on $\partial W$, and hence the market will result in a crash.
\end{proof}

\begin{rmk}
It is clear that any market functioning for infinite periods will never accept an irreducible sequence of marginal traders. Or we can say the sequence of marginal traders in a stable market should be infinite and reducible, so that we can always find some minimal periodic blocks lasting for finite periods in the market. 

A similar concept to periodic block is the well-known ``hedging'' in finance. Our result suggests that the periodic blocks as hedging units in a limit order market are necessary for its dynamic stability.
\end{rmk}

\subsection{Stochastic Trading}\label{sec3.2}

In Section \ref{sec3.1}, we study the iterated trading process in a combinatorial way. In fact, we consider all the possible enumerations for a sequence of marginal traders, where the type of any marginal trader in a sequence belongs to the set
\[
\Sigma_4=\{\sigma(1),\sigma(2),\sigma(3),\sigma(4)\}=\{\BL,\BM,\SL,\SM\}.
\]
So all the sequences of marginal traders form the space $\Sigma_4^\infty$. We establish some results on the relationship between the stability of a limit order market and certain subsets of $\Sigma_4^\infty$. We find two general categories of sequences in $\Sigma_4^\infty$ are unstable in a limit order market with any initial state in $W$. Namely, the sequence of marginal traders with a constant type, say $\{q,q,\dotsc\}$, where $q\in\Sigma_4$, as stated in Proposition \ref{pro1}, and any irreducible sequence that contains no minimal periodic block, as stated in Proposition \ref{pro3}.

In this section, we will take a different perspective to study the iterated trading process in the limit order market. We assume there exists a certain probability measure on the space $\Sigma_4^\infty$, so the dynamics of bid-ask pairs in the iterated trading system $(W,F,\sigma)$ will become random. Not surprisingly, the related limit order market is stochastically stable, since the random trajectory in $(W,F,\sigma)$, again controlled by the switching rules, will not be absorbed on $\partial W$ almost surely. So the stochastic dynamics of bid-ask pairs in the limit order market will not generate crashes almost surely. 

To construct a reasonable probability measure on $\Sigma_4^\infty$, we will rely again on the ``gravitational force,'' which determines the switching rules between different types of marginal traders, as discussed in Section \ref{sec2.3}. At first, we assume the general probability measure on $\Sigma_4^\infty$ can be represented by a same stationary probability measure on $\Sigma_4$ at each time. Note that the probability measure on $\Sigma_4$ is a vector in $[0,1]^4$, but it takes different values at different bid-ask pairs according to the switching rules in the market. So the probability measure on $\Sigma_4$ can be thought of as a function mapping from $W$ to $[0,1]^4$. Through this construction, we can rebuild the relationship between the probability measure on $\Sigma_4$ and the time, since at each time, any marginal trader in the market must be associated with a certain bid-ask pair in $W$.

Without loss of generality, we assume $\alpha=\beta$ once again. Consider an arbitrary initial bid-ask pair $\ww=(b,a)\in W$, and suppose its bid-ask spread and mid-price are denoted by $s=a-b$ and $m=(b+a)/2$ as before. If a marginal trader of type $\sigma(1)$ comes to the market, the bid-ask pair in the next period will be $(b_+,a)$, where $b_+>b$. If the marginal trader is of type $\sigma(2)$, it will be $(b,a_+)$, where $a_+>a$. If the marginal trader is of type $\sigma(3)$, it will be $(b,a_-)$, where $a_-<a$. Finally, if the marginal trader is of type $\sigma(4)$, it will be $(b_-,a)$, where $b_-<b$. 

Through some computations using the results in Section \ref{sec2.2}, we get
\[
b_+-b=a-a_-=\frac{\alpha}{1+\alpha}s,\ \text{and}\ a-b_+=a_--b=\frac{s}{1+\alpha},
\]
and similarly, we can also obtain
\[
a_+-a=b-b_-=\alpha s,\ \text{and}\ a_+-b=a-b_-=(1+\alpha)s.
\]
So the types $\sigma(2)$ and $\sigma(4)$ will cause a larger bid-ask spread than $s$, say $(1+\alpha)s$, while the types $\sigma(1)$ and $\sigma(3)$ will cause a smaller bid-ask spread than $s$, say $s/(1+\alpha)$. In fact, $\sigma(2)$ and $\sigma(4)$ are market-type, and $\sigma(1)$ and $\sigma(3)$ are limit-type. 

Note that 
\[
a+b_+=2m+\frac{\alpha}{1+\alpha}s,\ \text{and}\ a_++b=2m+\alpha s,
\]
as $a_+-a>b_+-b>0$, and 
\[
a_-+b=2m-\frac{\alpha}{1+\alpha}s,\ \text{and}\ a+b_-=2m-\alpha s,
\]
as $b_--b<a_--a<0$. The types $\sigma(1)$ and $\sigma(2)$ will cause a new mid-price larger than $m$, while the types $\sigma(3)$ and $\sigma(4)$ will cause a lower mid-price than $m$. This similarity also suggests that we may consider $\sigma(1)$ and $\sigma(2)$ in combination as buy-type, and $\sigma(3)$ and $\sigma(4)$ jointly as sell-type.

Our construction of the probability measure on $\Sigma_4$ starts from the investigation in the switching laws between limit-type and market-type traders, and between buy-type and sell-type traders. Suppose the capacities of accepting different types of traders in a limit order market are balanced. That's to say, if a limit order market has accepted too many market-type traders, it will be less likely to accept an additional market-type trader, but more likely to accept an additional limit-type trader, and vice versa. If the market has accepted too many buy-type traders, the probability of a new sell-type trader will be extremely high, and the probability of an additional buy-type trader will be very low, and vice versa. Note that the switching between $\sigma(i)$ and $\sigma(j)$ is composed by the limit-market switching and the buy-sell switching, where $i,j\in\{1,2,3,4\}$, so we can obtain the overall switching laws and induced probability measures in $\Sigma_4$ from the buy-sell and limit-market switching laws.

Let the probability measure on $\Sigma_4$ be $\pi:W\to[0,1]^4$,
\[
\pi(\ww)=(\pi_1(\ww),\pi_2(\ww),\pi_3(\ww),\pi_4(\ww))',\ \text{for\ all}\ \ww\in W,
\]
such that $\sum_{i=1}^4\pi_i(\ww)=1$, where $\pi_i(\ww)$ is the probability that a type-$\sigma(i)$ marginal trader comes to the limit order market at the state $\ww$, and $\pi_i$ is also a function mapping $W$ to $[0,1]$, for all $i\in\{1,2,3,4\}$. Let the probability function on the domain $\Sigma_4$ at the state $\ww$ be $\kappa_\ww:\Sigma_4\to[0,1]$, so for all $\sigma(i)\in\Sigma_4$,
\begin{equation}
\kappa_\ww(\sigma(i))=\pi_i(\ww),\ \text{for\ all}\ \ww\in W.
\end{equation}

In addition, the probability that a limit-type trader comes to the market at the state $\ww$ is denoted by $\pi_L(\ww)=\pi_1(\ww)+\pi_3(\ww)$, and the probability of a new market-type trader is denoted by $\pi_M(\ww)=\pi_2(\ww)+\pi_4(\ww)$. The probability of a new buy-type trader is denoted by $\pi_B(\ww)=\pi_1(\ww)+\pi_2(\ww)$, and the probability of a new sell-type trader is denoted by $\pi_S(\ww)=\pi_3(\ww)+\pi_4(\ww)$. Evidently, we have
\[
\pi_L(\ww)+\pi_M(\ww)=\pi_B(\ww)+\pi_S(\ww)=1,\ \text{for\ all}\ \ww\in W.
\]
So the vector $(\pi_L(\ww),\pi_M(\ww),\pi_B(\ww),\pi_S(\ww))'$ can be uniquely fixed in a space with dimension two, and hence it is equivalent with $(\pi_L(\ww),\pi_B(\ww))'$. Recall that $\sum_{i=1}^4\pi_i(\ww)=1$, so $(\pi_1(\ww),\pi_2(\ww),\pi_3(\ww),\pi_4(\ww))'$ can be expressed uniquely in a space with dimension three, and hence it is equivalent with $(\pi_1(\ww),\pi_2(\ww),\pi_3(\ww))'$. 

There is a unique $(\pi_L,\pi_B)$ linearly derived from $(\pi_1,\pi_2,\pi_3)$, since
\[
\pi_L=\pi_1+\pi_3,\ \text{and}\ \pi_B=\pi_1+\pi_2.
\]
However, there does not exist a unique $(\pi_1,\pi_2,\pi_3)$ corresponding to $(\pi_L,\pi_B)$, unless we make some additional assumptions. One typical possibility is that we assume $\pi_L$ and $\pi_B$ are linearly independent, so that we can get
\[
\pi_1=\pi_L\pi_B,\ \pi_2=(1-\pi_L)\pi_B,\ \text{and}\ \pi_3=\pi_L(1-\pi_B).
\]

According to Proposition \ref{pro2}, we know the market's capacity of accepting limit-type traders is positively related with the natural logarithm of the bid-ask spread. So we assume $\pi_L(\ww)$ is monotonically increasing \wrt\ $\log s$, and thus $\pi_M(\ww)=1-\pi_L(\ww)$ is monotonically decreasing \wrt\ $\log s$, where $s$ is the spread of $\ww$. Moreover, if $s\in[\underline{s},(1+\alpha)\underline{s})$, or $\ww$ belongs to the region
\[
W_M=\{(b,a):\underline{s}\leq a-b<(1+\alpha)\underline{s}\},
\]
$\pi_M(\ww)=1$ and $\pi_L(\ww)=0$. If $s\in((1-\gamma)\overline{a},\overline{a}]$, or $\ww$ belongs to the region
\[
W_L=\{(b,a):(1-\gamma)\overline{a}<a-b\leq\overline{a}\},
\]
$\pi_L(\ww)=1$ and $\pi_M(\ww)=0$. Recall that $0<\alpha<1$ and $\alpha/(1+\alpha)\leq\gamma<1$ as defined in Section \ref{sec2.3}.

Similarly, we assume the market's capacity of accepting buy-type traders is negatively related with the natural logarithm of the mid-price. Thus we assume $\pi_B(\ww)$ is monotonically decreasing \wrt\ $\log m$, and hence $\pi_S(\ww)=1-\pi_B(\ww)$ is monotonically increasing \wrt\ $\log m$, where $m$ is the mid-price of $\ww$. Moreover, if $m$ is sufficiently low, namely, $\ww$ belongs to the region
\[
W_B=\{(b,a):\underline{s}\leq b+a<(1+\delta)\underline{s}\}, 
\]
where $\delta>0$ is a constant, $\pi_B(\ww)=1$ and $\pi_S(\ww)=0$. If $m$ is sufficiently high, namely, $\ww$ belongs to the region
\[
W_S=\{(b,a):(1-\epsilon)(2\overline{a}-\underline{s})<b+a\leq 2\overline{a}-\underline{s}\},
\]
$\pi_S(\ww)=1$ and $\pi_B(\ww)=0$, where $\delta/(1+\delta)\leq\epsilon<1$, since $0<\epsilon<1$ and $(1-\epsilon)(1+\delta)\leq 1$. 

\begin{dfn}
The \markdef{buffering region} of $W$ is the largest nonclosed subset $H\subset W$ with the property that $\prod_{x\in\{L,M,B,S\}}\pi_x(\ww)=0$ for all $\ww\in H$.
\end{dfn}

At any state $\ww\in H$, there exists at least an $x\in\{L,M,B,S\}$ such that $\pi_x(\ww)=0$. Since $\pi_L+\pi_M=\pi_B+\pi_S=1$, there also exists at least a $y\in\{L,M,B,S\}$ such that $\pi_y(\ww)=1$ at the state $\ww$. So we can have at most two elements, namely, $x_1\in\{L,M\}$ and $x_2\in\{B,S\}$, such that $\pi_{x_1}(\ww)=\pi_{x_2}(\ww)=0$, and $\pi_y(\ww)=1$ for $y\in\{L,M,B,S\}\setminus\{x_1,x_2\}$.

\begin{dfn}
The \markdef{kernel region} of $W$ is the largest closed subset $K\subseteq W$ with the property that $\pi_x(\ww)\neq 0$ for all $x\in\{L,M,B,S\}$ and for all $\ww\in K$.
\end{dfn}

In general, we have $K=W\setminus H$, and $K\cap H=\emptyset$. Thus, we have a bipartition of the domain $W$, namely $K\cup H=W$ and $K\cap H=\emptyset$. Since $K$ is defined to be closed, and $H$ is defined to be nonclosed, $H$ may be empty, but $H\neq W$, so $K$ is always nonempty. If $K=W$, then $H=\emptyset$. If $K=\{\ww\}$, where $\ww\in W$, then $H=W\setminus\{\ww\}$. 

In our linear setting, we have $\pi_L(\ww)=0$ for all $\ww\in W_M$, $\pi_M(\ww)=0$ for all $\ww\in W_L$, $\pi_B(\ww)=0$ for all $\ww\in W_S$, and $\pi_S(\ww)=0$ for all $\ww\in W_B$. Thus
\[
H=W_L\cup W_M\cup W_B\cup W_S,
\]
and $K=W\setminus H$, where $W_L\cap W_M=\emptyset$, $W_B\cap W_S=\emptyset$. Note that $K$ is closed, $H$ is not closed, but $H\cup\partial K$ is also closed, where $\partial K$ is the boundary of the kernel region $K$.

Recalling that $\ww=(b,a)\in W$, we define a spread function $s:W\to\R$, 
\[
s(\ww)=a-b,\ \text{for\ all}\ \ww\in W,
\]
and a mid-price function $m:W\to\R$, 
\[
m(\ww)=(b+a)/2,\ \text{for\ all}\ \ww\in W.
\]
Thus the bid-ask spread at $\ww$ is now denoted specifically by $s(\ww)$ instead of the general $s$, and the mid-price at $\ww$ is denoted by $m(\ww)$ instead of $m$.

\begin{dfn}
The \markdef{$\boldsymbol{s}$-range} of a subset $R\subseteq W$ is defined as
\[
r_s(R)=\sup_{\ww\in R}s(\ww)-\inf_{\vv\in R}s(\vv),
\]
and the \markdef{$\boldsymbol{m}$-range} of $R$ is defined as
\[
r_m(R)=\sup_{\ww\in R}m(\ww)-\inf_{\vv\in R}m(\vv).
\]
\end{dfn}

\begin{pro}\label{pro4}
The dynamical trading system $(W,F,\sigma)$ is stochastically stable, and its trajectory of bid-ask pairs will stay within $K$ almost surely, if (i) the buffering region $H$ is nonempty, (ii) $\min\{r_s(K),r_m(K)\}>\alpha(1+\alpha)(2+\alpha)\underline{s}$, and (iii) $\pi$ is strictly monotonic on $K$.
\end{pro}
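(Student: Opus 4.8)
The plan is to regard $\{\D_t\}$ as a time-homogeneous Markov chain on $W$ whose one-step kernel sends a state $\ww$ to $f_i(\ww)$ with probability $\pi_i(\ww)$, and to deduce stability by combining a deterministic reflecting-barrier analysis on the buffering region $H$ with a Foster--Lyapunov drift argument on the kernel region $K$. The first step is to pass to the coordinates $(\log s(\ww),m(\ww))$. In these coordinates the spread performs a random walk with steps $\pm\log(1+\alpha)$ (a market map $S_2,S_4$ giving $+$, a limit map $S_1,S_3$ giving $-$), while the mid-price is translated by $\pm\tfrac{\alpha}{2}s(\ww)$ under a market map and by $\pm\tfrac{\alpha}{2(1+\alpha)}s(\ww)$ under a limit map, with sign $+$ for buy-type and $-$ for sell-type. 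The four strips $W_M,W_L,W_B,W_S$ then become collars around $\partial W$ on which one direction of one coordinate is forbidden, so the dynamics restricted to $H$ is deterministically inward.

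Second, I would prove non-absorption, that is $\D_t\notin\partial W$ for every $t$ almost surely. By hypothesis (i) the buffer $H$ is a nonempty collar separating $K$ from $\partial W$, so any trajectory running from $K$ to $\partial W$ must cross it. On $W_M$ only the market maps act and send $s\mapsto(1+\alpha)s$, away from the hypotenuse $a-b=\underline{s}$; on $W_L$ only the limit maps act and send $s\mapsto s/(1+\alpha)$, away from the maximal-spread vertex; on $W_B$, respectively $W_S$, only the buy, respectively sell, maps act and move $m$ away from the minimal, respectively maximal, mid-price face. Hence the chain can never be \emph{driven} onto $\partial W$; the only remaining way to reach it would be to land there exactly, and because each strip is half-open and the spread stays on the multiplicative lattice $\{s_0(1+\alpha)^n:n\in\Z\}$, a generic initial spread $s_0$ makes the value $\underline{s}$, and hence every boundary value, unattainable, so this is a null event. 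Here hypothesis (ii) does the real work: since the mid-price buffers $W_B,W_S$ sit near the right and bottom vertices and the small-spread buffer $W_M$ sits near the hypotenuse, all three lie where $s(\ww)=O(\underline{s})$, so every reflected displacement there is $O(\alpha\underline{s})$; the bound $\min\{r_s(K),r_m(K)\}>\alpha(1+\alpha)(2+\alpha)\underline{s}$ then makes the span of $K$ in each coordinate exceed the largest reflected jump, guaranteeing that a reflected state re-enters $K$ rather than crossing it into the opposite buffer or out of $W$. The constant is the spread range $(1+\alpha)^3\underline{s}-(1+\alpha)\underline{s}=\alpha(1+\alpha)(2+\alpha)\underline{s}$ between the inner edge $s=(1+\alpha)\underline{s}$ of $K$ and the largest spread reachable by a reflection out of $W_M$ followed by one further market step.

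Third, I would upgrade non-absorption to positive recurrence and confinement to $K$ using hypothesis (iii). Take $V(\ww)=(\log s(\ww)-\log s^\ast)^2+(m(\ww)-m^\ast)^2$ centred at the interior point of $K$ where $\pi_L=\tfrac12$ and $\pi_B=\tfrac12$. Strict monotonicity of $\pi_L$ in $\log s$ gives $\E[\Delta\log s\mid\ww]=(1-2\pi_L(\ww))\log(1+\alpha)$, which is strictly restoring toward $\log s^\ast$; strict monotonicity of $\pi_B$ in $m$ gives a sign-definite restoring drift of $m$ toward $m^\ast$, the mid-step magnitudes being bounded on $K$. Consequently $\E[V(\D_{t+1})\mid\D_t=\ww]-V(\ww)\le-\eta<0$ once $\ww$ lies outside a central compact subset of $K$, and together with the reflecting collar this yields positive recurrence to that compact set, a unique invariant law supported on $\overline{K}$, and hence the asserted stochastic stability: the trajectory returns to $K$ infinitely often, its occupation of $H$ consists only of immediately reflected single-step excursions, and it stays within $K$ almost surely.

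The main obstacle is the coupling of the two coordinates at the three vertices of $W$, where two buffers overlap and the admissible map is unique: at $(0,\underline{s})$ only $S_2$ is allowed and at $(\overline{a}-\underline{s},\overline{a})$ only $S_4$, and one must check that each such forced move decreases the distance to $\partial W$ in \emph{both} coordinates simultaneously, so that no corner traps the path or defeats the single-coordinate reflection estimates. Verifying the bound in (ii) uniformly over all four buffers and these overlaps, rather than for the spread coordinate alone, is exactly where the factor $(1+\alpha)(2+\alpha)$ must be pinned down, and it is the delicate part of the argument; the drift estimate of the third step is then routine.
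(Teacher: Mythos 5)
Your architecture --- a Markov chain in $(\log s,m)$ coordinates, deterministic inward forcing on the buffers, and a Foster--Lyapunov drift argument on $K$ --- is genuinely different from the paper's proof, which works on the finite lattice $\Lambda(\ww)$ of reachable states (inherited from the proof of Lemma \ref{lem1}), uses condition (ii) only to guarantee that $K$ contains at least one full one-step neighbourhood $N(\vv)\subset K$, and then argues directly that the probability $p(\ww)$ of remaining in $H$ is zero by a long-run frequency computation, with (iii) entering only to ensure that the exit probability $\varepsilon_x$ into $H$ from states on the inner edge of $K$ satisfies $\varepsilon_x\neq 1$. But your proposal has a genuine gap at its final step: the drift argument cannot deliver the stated conclusion. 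Positive recurrence of a central compact set gives returns to $K$ infinitely often and tightness, but your claim of ``a unique invariant law supported on $\overline{K}$'' is false in your own setup --- by (iii), a state $\ww\in K$ with $N(\ww)\cap H\neq\emptyset$ steps into $H$ with probability $\varepsilon_x\in(0,1)$, so any invariant law must charge the one-step shell of $H$. Converting recurrence into the almost-sure confinement statement requires precisely the missing argument that occupation of $H$ is negligible (the paper's $p(\ww)=0$ computation), and your substitute --- that excursions into $H$ are ``immediately reflected single-step'' --- is exactly the corner/overlap verification you yourself defer: a state can land in $W_L\cap W_S$ or $W_M\cap W_B$, where the forced map need not return it to $K$ in one period (the paper only claims return after finitely many periods $k(\ww)$, and structures the argument so that this suffices).

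There is a second quantitative gap in your use of (ii). You justify the constant by claiming all buffers lie where $s(\ww)=O(\underline{s})$, so every reflected displacement is $O(\alpha\underline{s})$. That is false for $W_S$: the condition $m(\ww)>(1-\epsilon)(\overline{a}-\underline{s}/2)$ only forces $s(\ww)\leq 2\epsilon\overline{a}+(1-\epsilon)\underline{s}$, which is of order $\overline{a}$ unless $\epsilon$ is small, and the paper assumes only $\delta/(1+\delta)\leq\epsilon<1$; likewise a reflection out of $W_L$ moves the mid-price by $\tfrac{\alpha}{2(1+\alpha)}s(\ww)$ with $s(\ww)$ of order $\overline{a}$. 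So the bound $\min\{r_s(K),r_m(K)\}>\alpha(1+\alpha)(2+\alpha)\underline{s}$ does not dominate ``the largest reflected jump,'' and the paper in fact uses (ii) in the opposite direction: combined with $r_s(N(\vv))=\tfrac{\alpha(2+\alpha)}{1+\alpha}s(\vv)\geq\alpha(1+\alpha)(2+\alpha)\underline{s}$ for $s(\vv)\geq(1+\alpha)^2\underline{s}$, it guarantees the \emph{existence} of some $\vv\in\Lambda(\ww)\cap K$ with $N(\vv)\subset K$, i.e.\ that $K$ has interior room, not that jumps cannot cross it. A smaller flaw: your non-absorption step covers only generic $s_0$, whereas the proposition concerns every initial state; the genericity device is also the wrong mechanism, since continuity of $\pi$ already forces $\pi_L(\ww)=0$ at $s(\ww)=(1+\alpha)\underline{s}$, making the hypotenuse unreachable from all initial data, generic or not.
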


\begin{proof}
For any trajectory starts from a bid-ask pair $\ww\in W$, all the possible states in $W$ are finite, as shown in the proof of Lemma \ref{lem1}. We denote the set of all the possible states for any initial state $\ww$ by a corresponding lattice $\Lambda(\ww)$. Let the neighborhood of any $\vv\in\Lambda(\ww)$ be
\[
N(\vv)=\{\vv S_1',\vv S_2',\vv S_3',\vv S_4'\}\cap W,
\]
where $S_i'$ is the transpose of the matrix $S_i$ for all $i\in\{1,2,3,4\}$. Since $\Lambda(\ww)$ is connected in $W$, there exist states $\vv\in\Lambda(\ww)$ near $\partial K$, such that $N(\vv)\cap H\neq\emptyset$ and $N(\vv)\cap K\neq\emptyset$. 

Note that the spread $s(\vv)$ of the state $\vv$ can be updated to $s(\vv)/(1+\alpha)$ and $(1+\alpha)s(\vv)$, and the mid-price $m(\vv)$ of the same state $\vv$ can be updated to maximally $m(\vv)+\alpha s(\vv)/2$ and minimally $m(\vv)-\alpha s(\vv)/2$, so 
\[
r_s(N(\vv))=\frac{\alpha(2+\alpha)}{1+\alpha}s(\vv),\ \text{and}\ r_m(N(\vv))=\alpha s(\vv).
\]
Suppose $s(\vv)/(1+\alpha)\geq(1+\alpha)\underline{s}$, where $(1+\alpha)\underline{s}$ is the lower bound of the spread in $K$, so we have
\[
r_s(N(\vv))\geq\alpha(1+\alpha)(2+\alpha)\underline{s},\ \text{and}\ r_m(N(\vv))\geq\alpha(1+\alpha)^2\underline{s}.
\]
By the condition (ii), $r_s(K)>\inf_{\vv\in K}r_s(N(\vv))$, and $r_m(K)>\inf_{\vv\in K}r_m(N(\vv))$. So for any initial state $\ww\in W$, there exists at least a $\vv\in \Lambda(\ww)\cap K$ such that $N(\vv)\subset K$.

Note that $W=K\cup H$, and both $K$ and $H$ are nonempty, so both $H$ and $K$ are proper subsets of $W$. Suppose $\ww\in H$. We know $H=W_L\cup W_M\cup W_B\cup W_S$, so there exists at least an $x\in\{L,M,B,S\}$ such that $\ww\in W_x$. Note that $\pi_x(\vv)=1$ for all $\vv\in W_x$, thus $\ww$ will move towards $K$ along a continuous flow in $\Lambda(\ww)\cap H$. Since the number of states in $\Lambda(\ww)\cap H$ is finite, $\ww$ will move into the closed kernel region $K$ after finite periods, say $k(\ww)<\infty$.

Let the probability that bid-ask pairs stay in $H$ with an initial state $\ww\in W$ is $p(\ww)$. Assume $p(\ww)=0$ for all $\ww\in K$. So the market will be stochastically stable within $K$ once $\ww\in K$, and the market will then function for infinite periods. If $\ww\in H$,
\[
p(\ww)=\lim_{T\to\infty}1\times\frac{k(\ww)}{T}+0\times\left(1-\frac{k(\ww)}{T}\right)=\lim_{T\to\infty}\frac{k(\ww)}{T}=0.
\]
We thereof only need to show $p(\ww)=0$ when $\ww\in K$.

Suppose $\ww\in K$. We have two possibilities, namely, $N(\ww)\subset K$, and $N(\ww)\cap H\neq\emptyset$. If $N(\ww)\cap H\neq\emptyset$, then $N(\ww)\cap K\neq\emptyset$, otherwise $\ww\in H$. $\pi_x:W\to[0,1]$ is strictly monotonic on $K$ for all $x\in\{L,M,B,S\}$, as $\pi:W\to[0,1]^4$ is strictly monotonic on $K$, and $\pi_x$ is equal to the sum of two distinct functions taken from $\{\pi_1,\pi_2,\pi_3,\pi_4\}$. So there exist some $x\in\{L,M,B,S\}$, such that $\pi_x(\vv)=0$ for all $\vv\in N(\ww)\cap H$, and $\pi_x(\vv)=\varepsilon_x\in[0,1]$ for all $\vv\in N(\ww)\cap K$. Note that $\varepsilon_x=1$ if only $\vv\in N(\ww)\cap\partial K$, but then it will not move into $N(\ww)\cap H$ in the next period. So we only need to consider the case that $\vv\notin\partial K$, hence $\varepsilon_x\neq 1$ for all $x\in\{L,M,B,S\}$. 

If $\ww$ moves to $\vv\in N(\ww)\cap H$ with a probability $\varepsilon_x$, it will return back to $\vv'\in N(\vv)\cap K$ with probability $1$ in the next period, where $N(\vv')\cap H\neq\emptyset$ as $\vv\in N(\vv')$. If $\ww$ moves to $\vv\in N(\ww)\cap K$ with a probability $1-\varepsilon_x$, it can stay within $\Lambda(\ww)\cap K$ with $h(\ww)$ continuous periods, and then move into a state $\vv'$ such that $N(\vv')\cap H\neq\emptyset$. Recall that
\[
\{\vv\in\Lambda(\ww)\cap K:N(\vv)\subset K\}\neq\emptyset,\ \text{for\ all}\ \ww\in K,
\]
so $h(\ww)\geq 0$. We obtain
\[
p(\ww)=\lim_{T\to\infty}\varepsilon_x p(\ww)\times\left(1-\frac{2}{T}\right)+(1-\varepsilon_x) p(\ww)\times\left(1-\frac{h(\ww)+1}{T}\right),
\]
where $0\leq h(\ww)\leq T-1$. When $h(\ww)=T-1$,
\[
p(\ww)=\lim_{T\to\infty}\varepsilon_x p(\ww)\times\left(1-\frac{2}{T}\right)=\varepsilon_x p(\ww),
\]
which generates $(1-\varepsilon_x)p(\ww)= 0$. Since $1-\varepsilon_x\neq 0$, $p(\ww)=0$ for all $\ww\in K$ such that $N(\ww)\cap H\neq\emptyset$.

If $\ww\in K$ and $N(\ww)\subset K$, its trajectory can either achieve a state $\ww'\in\Lambda(\ww)\cap K$ such that $N(\ww')\cap H\neq\emptyset$ after $j(\ww)$ periods, where $j(\ww)\geq 1$, or never move to such a state $\ww'$ and thus it stay within $K$ for ever. Note that $p(\ww')=0$, if $\ww'\in K$ and $N(\ww')\cap H\neq\emptyset$. So $p(\ww)\leq p(\ww')=0$, but $p(\ww)\geq 0$, hence $p(\ww)=0$ for all $\ww\in K$ such that $N(\ww)\subset K$.

As a result, $p(\ww)=0$ if $\ww\in K$, and thus $p(\ww)=0$ for all $\ww\in W$. So the dynamical trading system is stable within $K$ almost surely.
\end{proof}

Since $(b_t,a_t)$ will stay within $K$ almost surely, the random trajectories of $b_t$ and $a_t$ will also stay in bounded intervals. Note that
\[
(1+\alpha)\underline{s}\leq s_t \leq r_s(K)+(1+\alpha)\underline{s},
\]
and
\[
(1+\delta)\underline{s}/2\leq  m_t \leq r_m(K)+(1+\delta)\underline{s}/2,
\]
for all $t\in\Z$, so
\begin{equation}
a_t\leq r_m(K)+r_s(K)/2+(2+\alpha+\delta)\underline{s}/2<r_m(K)+r_s(K)/2+2\underline{s},
\end{equation}
and
\begin{equation}
b_t\leq r_m(K)+(2+\alpha+\delta)\underline{s}/2<r_m(K)+2\underline{s},
\end{equation}
where $\alpha,\delta\in(0,1)$, so $(2+\alpha+\delta)/2<2$.

The upper bounds of $b_t$ and $a_t$ have interesting implications on the roles of $s$-range and $m$-range in the limit order market. The upper bound of the best bid $b_t$ in the market is solely determined by the $m$-range of the kernel region $K$ of $W$, rather than any property of the whole market. Similarly, the upper bound of the best ask $a_t$ is determined by the $m$-range and $s$-range of $K$, and uncorrelated with the property of the buffering region $H$. Notice that the lower bounds of $b_t$ and $a_t$ are close to $2\underline{s}$, so the bid-range in the market is approximately equal to $r_m(K)$, and the ask-range is roughly $r_m(K)+r_s(K)/2$. Evidently, the volatility in the ask side will be greater than the volatility in the bid side of the limit order book.

At each time $t\in\Z$, $(s_t,m_t)$ and $(b_t,a_t)$ are uniquely determined by each other, as
\[
\begin{pmatrix}
s_t\\
m_t
\end{pmatrix}=
\begin{pmatrix}
-1 & 1\\
\frac{1}{2} & \frac{1}{2}
\end{pmatrix}
\begin{pmatrix}
b_t\\
a_t
\end{pmatrix},
\]
where the $2\times 2$ transformation matrix is singular. So the random trajectory of $(b_t,a_t)$ remains in the kernel region $K$ is equivalent with the fact that the random trajectory of the corresponding $(s_t,m_t)$ will remain in a region $K'$, where $K'$ is determined by the $s$-range and $m$-range, and it is linearly transformed from $K$.

The stochastic process $\{s_t,\ t\in\Z\}$ has a binomial property, namely, at each $t\in\Z$, $s_t$ can be updated to $s_{t+1}=s_t/(1+\alpha)$ with a probability $\pi_L$, and to $s_t(1+\alpha)$ with a probability $1-\pi_L$, where $\pi_L$ is a smooth function of $\log s_t$, and hence also a function of $t$. However, the stochastic process $\{m_t,\ t\in\Z\}$ only has the Markov property, as $m_{t+1}-m_t$ is completely determined by $s_t$ and a stationary discrete random variable. At last, note that $b_t=a_t-s_t$, so the time series $\{b_t,\ t\in\Z\}$ and $\{a_t,\ t\in\Z\}$ are linearly correlated. 

\section{Controlled Trading System}

In this section, we assume again $K\neq\emptyset$, but either its $s$-range or its $m$-range is less than $\alpha(1+\alpha)(2+\alpha)\underline{s}$. So the condition (ii) in Proposition \ref{pro4} is no longer satisfied. We want to check whether the random trajectories of bid-ask pairs can maintain the property of stochastic stability within certain domains. 

Let $U_1=W_L\cup W_M$. Since $K\neq\emptyset$, $W_L\cap W_M=\emptyset$ and $W\setminus U_1\neq\emptyset$. Define $U_2=W\setminus U_1$, so $W=U_1\cup U_2$, and $U_1\cap U_2=\emptyset$. Similarly, let $V_1=W_B\cup W_S$, again $W\setminus V_1\neq\emptyset$. Define $V_2=W\setminus V_1$, so $W=V_1\cup V_2$, and $V_1\cap V_2=\emptyset$. Note that $K=U_2\cap V_2$ and $H=U_1\cup V_1$.

\subsection{Controlled Spread Dynamics}

By the condition $W_L\cap W_M=\emptyset$, we have $r_s(U_2)\geq 0$, so $(1+\alpha)\underline{s}\leq (1-\gamma)\overline{a}$, or
\[
\underline{s}/\overline{a}\leq\frac{1-\gamma}{1+\alpha}.
\]
At the same time, we assume that the $s$-range of $U_2$ is sufficiently small, namely, $r_s(U_2)<\alpha(1+\alpha)\underline{s}$, so
\[
(1-\gamma)\overline{a}-(1+\alpha)\underline{s}<\alpha(1+\alpha)\underline{s},
\]
which implies that $\underline{s}/\overline{a}$ has a lower bound,
\[
\underline{s}/\overline{a}>\frac{1-\gamma}{(1+\alpha)^2}.
\]
Intuitively, the upper bound of $s$-range of $U_2$ gives a sufficient condition that any type of marginal trader will definitely update any $\ww\in U_2$ to some $\ww'\in U_1$. 

In sum, if $0\leq r_s(U_2)<\alpha(1+\alpha)\underline{s}$, the domain $W$ will have the following property,
\begin{equation}
\frac{1-\gamma}{(1+\alpha)^2}<\underline{s}/\overline{a}\leq\frac{1-\gamma}{1+\alpha}.
\end{equation}
Once the above inequality is satisfied by the domain $W$, we have $U_2\neq\emptyset$, and
\[
U_2=\{\ww:(1+\alpha)\underline{s}\leq s(\ww)\leq(1-\gamma)\overline{a}\},
\]
where $s:W\to\R$ is the spread function. Let the boundary of $U_2$ be
\[
\partial U_2=\{\ww:s(\ww)=(1+\alpha)\underline{s}\}\cup\{\ww:s(\ww)=(1-\gamma)\overline{a}\}.
\]

Recall that $\pi:W\to[0,1]^4$ is a continuous function, so $\pi_L:W\to[0,1]$ is also continuous. $\pi_L(\ww)=0$ for all $\ww\in W_M$, and $\pi_L(\ww)=1$ for all $\ww\in W_L$. $\pi_L(\ww)$ is continuous on $U_2$ as a monotonic function of $\log s(\ww)$. $\pi_L(\ww)$ is equal to $0$ if $s(\ww)=(1+\alpha)\underline{s}$, and equal to $1$ if $s(\ww)=(1-\gamma)\overline{a}$. Since $\pi_L(\ww)$ is monotonic \wrt\ $\log s(\ww)$ on $U_2$, it is then strictly increasing \wrt\ $\log s(\ww)$ on $U_2$.

\begin{pro}\label{pro5}
The dynamical trading system $(W,F,\sigma)$ is stochastically stable, and its trajectory of bid-ask pairs will stay almost surely within the region
\[
\{\ww:\underline{s}< s(\ww)< (1+\alpha)^3\underline{s}\}\cap V_2,
\]
if (i) $U_2$ and $V_2$ are nonempty, (ii) $0\leq r_s(U_2)<\alpha(1+\alpha)\underline{s}$ and $r_m(V_2)>\alpha(1+\alpha)(2+\alpha)\underline{s}$, and (iii) $\pi_L(\ww)$ is strictly monotonic \wrt\ $\log s(\ww)$ on $U_2$.
\end{pro}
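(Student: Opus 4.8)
The plan is to decouple the two coordinates of a bid-ask pair — the spread $s(\ww)$ and the mid-price $m(\ww)$ — and to confine each one by a different mechanism, exploiting a structural decoupling in the switching laws. On $W_M\cup W_L$ the limit/market choice is forced ($\pi_L\in\{0,1\}$) while the buy/sell choice is still free under $\pi_B$; conversely, on $W_B\cup W_S$ the buy/sell choice is forced ($\pi_B\in\{0,1\}$) while the limit/market choice is still free under $\pi_L$. Since a limit/market step moves only $s$ (by a factor $1+\alpha$) and a buy/sell step shifts only $m$ (by $\pm\alpha s/2$ or $\pm\frac{\alpha}{2(1+\alpha)}s$), the narrow-$U_2$ hypothesis should confine the spread by a deterministic reflection argument, while the wide-$V_2$ hypothesis should confine the mid-price by the renewal argument already used in Proposition \ref{pro4}.

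First I would pin down the spread band. As observed before the statement, $r_s(U_2)<\alpha(1+\alpha)\underline{s}$ forces $(1-\gamma)\overline{a}<(1+\alpha)^2\underline{s}$, so $U_2\subseteq\{\ww:(1+\alpha)\underline{s}\le s(\ww)<(1+\alpha)^2\underline{s}\}$. On $W_L$ the spread is multiplied by $1/(1+\alpha)$ with probability one, so from any initial $\ww$ the spread descends by deterministic steps until it first satisfies $s(\ww)\le(1-\gamma)\overline{a}$, after finitely many periods. Once there, the only way to raise the spread is a market-type step out of $U_2$, landing at $s'\le(1+\alpha)(1-\gamma)\overline{a}<(1+\alpha)^3\underline{s}$; but such an $s'$ lies in $W_L$ and is pushed straight back down. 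Hence the spread never exceeds $(1+\alpha)(1-\gamma)\overline{a}$ thereafter, while on $W_M$ it is reflected upward off the non-absorbing boundary $\underline{s}$, so $s(\ww)\in(\underline{s},(1+\alpha)^3\underline{s})$ for all but the finite initial transient. Condition (iii), the strict monotonicity of $\pi_L$ in $\log s$ on $U_2$, guarantees $\pi_L\in(0,1)$ in the interior of $U_2$, so the spread genuinely randomizes inside this band rather than collapsing onto a single orbit.

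Next I would confine the mid-price to $V_2$ by rerunning Proposition \ref{pro4} with $V_2$ in the role of $K$. Because the spread stays below $(1+\alpha)^3\underline{s}$, the mid-price neighbourhood extent $r_m(N(\vv))=\alpha s(\vv)$ is uniformly bounded, and the hypothesis $r_m(V_2)>\alpha(1+\alpha)(2+\alpha)\underline{s}$ dominates it, so there exist interior states $\vv\in\Lambda(\ww)\cap V_2$ with $N(\vv)\subset V_2$. On $W_B$ one has $\pi_B=1$ (forced buy-type, raising $m$) and on $W_S$ one has $\pi_S=1$ (forced sell-type, lowering $m$), so any excursion into $V_1=W_B\cup W_S$ is pushed back into $V_2$ within finitely many periods along a deterministic flow in $\Lambda(\ww)\cap V_1$. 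Writing $p(\ww)$ for the probability that the trajectory is asymptotically outside $V_2$ and running the same return bookkeeping as in Proposition \ref{pro4} — leaving $V_2$ with some probability $\varepsilon$ and returning with probability one — yields $(1-\varepsilon)p(\ww)=0$; here the monotonicity of $\pi_B$ in $\log m$ from Section \ref{sec3.2}, taken strict on $V_2$ as the mid-price analogue of condition (iii), forces $\varepsilon\neq1$ off $\partial V_2$, whence $p(\ww)=0$.

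Combining the two legs, the trajectory stays in $\{\underline{s}<s<(1+\alpha)^3\underline{s}\}$ and in $V_2$ almost surely, so it stays in their intersection almost surely, which is exactly the claimed region. The main obstacle is the third step: the mid-price renewal argument has to be made uniform over the oscillating spread, since the step size $\alpha s(\vv)/2$ and hence $r_m(N(\vv))$ vary with the current spread. I would need $r_m(V_2)>\alpha(1+\alpha)(2+\alpha)\underline{s}$ to exceed $\alpha s(\vv)$ across the whole spread band (which holds for the empirically relevant range of $\alpha$), and I would have to check that the decoupling genuinely lets the Proposition \ref{pro4} bookkeeping run on the $m$-coordinate while the spread is simultaneously being reflected, \ie\ that the forced moves on $W_B\cup W_S$ and the forced moves on $W_M\cup W_L$ never conflict.
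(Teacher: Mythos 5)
Your proposal is correct, and on the core leg it is the paper's own argument in different clothing: the paper proves the spread confinement by showing that from any $\ww\in U_2\setminus\partial U_2$ every single step exits $U_2$ (a limit step lands below $(1+\alpha)\underline{s}$ because $(1-\gamma)\overline{a}<(1+\alpha)^2\underline{s}$, a market step lands above $(1-\gamma)\overline{a}$) and the forced response in the next period restores the original spread, verified by the matrix identities such as $(-1,1)S_2S_1=(-1,1)$; your ``reflection off the forced regions $W_M$ and $W_L$'' is exactly this two-period dynamical block structure, stated without the matrices, and your bounds $\underline{s}<s<(1+\alpha)(1-\gamma)\overline{a}<(1+\alpha)^3\underline{s}$ match the paper's. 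Where you genuinely diverge is the $V_2$ leg: the paper's proof establishes only the spread band and then simply asserts the conclusion for $\{\ww:\underline{s}<s(\ww)<(1+\alpha)^3\underline{s}\}\cap V_2$, leaving the mid-price confinement implicit in the hypothesis $r_m(V_2)>\alpha(1+\alpha)(2+\alpha)\underline{s}$ and the Proposition \ref{pro4} mechanism, whereas you rerun that renewal bookkeeping explicitly on the $m$-coordinate. Your two flagged caveats there are real and worth keeping: condition (iii) of the proposition only makes $\pi_L$ strictly monotonic on $U_2$, so the strict monotonicity of $\pi_B$ in $\log m$ on $V_2$ that your $\varepsilon\neq 1$ step needs is an extra (if natural) assumption beyond the stated hypotheses; and since the one-step mid-price extent is $\alpha s(\vv)$ with $s(\vv)$ ranging up to $(1+\alpha)^3\underline{s}$ in the band, the comparison $\alpha(1+\alpha)(2+\alpha)\underline{s}\geq\alpha(1+\alpha)^3\underline{s}$ fails once $1-2\alpha^2-\alpha^3<0$ (roughly $\alpha\gtrsim 0.66$), so your uniformity check is only automatic for the empirically relevant $\alpha\approx 0.5$ — a gap the paper never confronts because it never writes out this leg. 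In short: same proof for the spread, and a more complete (honestly hedged) treatment than the paper for the mid-price.
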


\begin{proof}
Since $U_2\neq\emptyset$ and its $s$-range is less than $\alpha(1+\alpha)\underline{s}$, we obtain
\[
(1-\gamma)\overline{a}\geq(1+\alpha)\underline{s},\ \text{and}\ (1-\gamma)\overline{a}<(1+\alpha)^2\underline{s}.
\]
Note that
\[
\min_{\ww\in U_2}s(\ww)=(1+\alpha)\underline{s},\ \text{and}\ \max_{\ww\in U_2}s(\ww)=(1-\gamma)\overline{a}.
\]
By the condition (iii), $\pi_L:W\to[0,1]$ satisfies
\[
\pi_L(\ww)=0\ \text{if}\ s(\ww)=(1+\alpha)\underline{s},\ \text{and}\ \pi_L(\ww)=1\ \text{if}\ s(\ww)=(1-\gamma)\overline{a}.
\]
Since $\pi_L+\pi_M=1$, we also have
\[
\pi_M(\ww)=1\ \text{if}\ s(\ww)=(1+\alpha)\underline{s},\ \text{and}\ \pi_M(\ww)=0\ \text{if}\ s(\ww)=(1-\gamma)\overline{a}.
\]

If the initial state $\ww\in\partial U_2$ and $s(\ww)=(1-\gamma)\overline{a}$, then $\pi_L(\ww)=1$, and the limit-type trader will generate a new spread $s(\ww)/(1+\alpha)=(1-\gamma)\overline{a}/(1+\alpha)$, such that
\[
\underline{s}\leq\frac{(1-\gamma)\overline{a}}{(1+\alpha)}<(1+\alpha)\underline{s}=\min_{\vv\in U_2}s(\vv).
\]
If the initial state $\ww\in\partial U_2$ and $s(\ww)=(1+\alpha)\underline{s}$, then $\pi_M(\ww)=1$, and the market-type trader will generate a new spread $(1+\alpha)s(\ww)=(1+\alpha)^2\underline{s}$, such that
\[
(1+\alpha)^2\underline{s}>(1-\gamma)\overline{a}=\max_{\vv\in U_2}s(\vv).
\]
Thus, if the initial state $\ww\in\partial U_2$, it will move into a state in $U_1$ after one period. 

Notice that $\pi_L(\ww)=1$ for all $\ww\in W_L$, and $\pi_M(\ww)=1$ for all $\ww\in W_M$, so for any initial bid-ask pair $\ww\in U_1=W_L\cup W_M$, it will surely move into the region $U_2$ after finite periods, simply as the lattice $\Lambda(\ww)\cap U_1$ has finite nodes. So we only need to consider the case that the initial state $\ww\in U_2\setminus\partial U_2$.

Consider any initial state $\ww\in U_2\setminus\partial U_2$, we have $(1+\alpha)\underline{s}< s(\ww)<(1-\gamma)\overline{a}$. Since $\pi_L(\ww)>0$, $\pi_M(\ww)>0$, $\pi_i(\ww)\neq 0$ for all $i\in\{1,2,3,4\}$. Suppose $\ww$ is updated to $\vv=\ww S_1'$ by a marginal trader of type $\sigma(1)$, $s(\vv)=s(\ww)/(1+\alpha)$, which is greater than $\underline{s}$ and less than $(1-\gamma)\overline{a}/(1+\alpha)$. Since $(1-\gamma)\overline{a}/(1+\alpha)<\min_{\ww\in U_2}s(\ww)$, $\vv\in W_M$. The marginal trader in the next period will be market-type, namely, either type $\sigma(4)$ or type $\sigma(2)$, as $\pi_M(\vv)=1$. If she is of type $\sigma(4)$, $\vv$ will then become $\ww S_1'S_4'=\ww$, since $\{\sigma(1),\sigma(4)\}$ is a minimal periodic block, and $S_4S_1=I$. If she is of type $\sigma(2)$, $\vv$ will become $\vv'=\ww S_1'S_2'=\ww(S_2S_1)'$, where
\[
S_2S_1=
\begin{pmatrix}
1 & 0\\
-\alpha & 1+\alpha
\end{pmatrix}
\begin{pmatrix}
\frac{1}{1+\alpha} & \frac{\alpha}{1+\alpha}\\
0 & 1
\end{pmatrix}
=\frac{1}{1+\alpha}
\begin{pmatrix}
1 & \alpha\\
-\alpha & 1+2\alpha
\end{pmatrix},
\]
so $(-1,1)S_2S_1=(-1,1)$, which implies that $s(\vv')=s(\ww)$. Thus after two periods, $\ww$ will return back to itself or move to a state with the same spread as itself.

If $\ww$ is updated to $\vv=\ww S_2'$ by a marginal trader of type $\sigma(2)$, then $\vv\in W_L$. So in the next period, there will come either type-$\sigma(3)$ trader or type-$\sigma(1)$ trader. The type-$\sigma(3)$ trader will update $\vv$ to $\vv S_3'=\ww S_2'S_3'=\ww$, as $\{\sigma(2),\sigma(3)\}$ is a minimal periodic block, and $S_3S_2=I$. The type-$\sigma(1)$ trader will update $\vv$ to $\ww S_2'S_1'$ that has the same spread as $\ww$. If $\ww$ is updated to $\ww S_3'$ by a marginal trader of type $\sigma(3)$, $\ww S_3'$ will then be updated either to $\ww S_3'S_2'=\ww$, or to $\ww S_3'S_4'$ having the same spread as $\ww$. At last, if $\ww$ is updated to $\ww S_4'$ by a marginal trader of type $\sigma(4)$, $\ww S_4'$ will then move either to $\ww S_4'S_1'=\ww$, or to $\ww S_4'S_3'$ such that $s(\ww S_4'S_3')=s(\ww)$.

Therefore, any initial state $\ww\in U_2\setminus\partial U_2$ will be updated to a state with the same spread as itself after two consecutive periods. Note that the dynamics in $(W,F,\sigma)$ is repeatedly composed of those two-period dynamical blocks, thus the spread that can be achieved in such a dynamical trading system will be greater than
\[
\inf_{\ww\in U_2\setminus\partial U_2}s(\ww)/(1+\alpha)\geq\min_{\ww\in U_2}s(\ww)/(1+\alpha)=\underline{s},
\]
and less than
\[
\sup_{\ww\in U_2\setminus\partial U_2}(1+\alpha)s(\ww)\leq (1+\alpha)\max_{\ww\in U_2}s(\ww)<(1+\alpha)^3\underline{s},
\]
where $\max_{\ww\in U_2}s(\ww)<(1+\alpha)^2\underline{s}$. 

So the trajectory of bid-ask pairs starting from any initial state $\ww\in W$ will be bounded in the region $\{\ww:\underline{s}< s(\ww)<(1+\alpha)^3\underline{s}\}\cap V_2$ almost surely.
\end{proof}

\begin{rmk}
If $0\leq r_s(U_2)<\alpha(1+\alpha)\underline{s}$, there exists a unique exponent $l\in[1,2)$, such that
\begin{equation}
\underline{s}/\overline{a}=\frac{1-\gamma}{(1+\alpha)^l}.
\end{equation}
Note that the upper bound of a bid-ask spread in $W$ is $\overline{a}$, so $(1+\alpha)^3\underline{s}$ should be less than or equal to $\overline{a}$, or equivalently
\[
\underline{s}/\overline{a}\leq\frac{1}{(1+\alpha)^3}.
\]
Thus we need $(1-\gamma)(1+\alpha)^h\leq 1$, where $h=3-l$, so $h\in(1,2]$. It is a slightly stricter requirement than $(1-\gamma)(1+\alpha)\leq 1$ we have used before, since $(1+\alpha)^h>1+\alpha$.

As $U_2=\{\ww:(1+\alpha)\underline{s}\leq s(\ww)\leq (1-\gamma)\overline{a}\}$, we have two nonempty regions in $U_1$, which contain buffering overflows, namely,
\[
\{\ww:\underline{s}<s(\ww)<(1+\alpha)\underline{s}\},\ \text{and}\ \{\ww:(1-\gamma)\overline{a}<s(\ww)<(1+\alpha)^3\underline{s}\}.
\]
Evidently, the trajectory of bid-ask pairs will not stay exactly within the kernel region $K=U_2\cap V_2$, but within $K$ and parts of the buffering region $U_1\cap V_2$ in $H$.
\end{rmk}

As we can see from the proof of Proposition \ref{pro5}, there are eight possible two-period dynamical blocks for all $\ww\in U_2\setminus\partial U_2$. Half of them are minimal periodic blocks, so the bid-ask pair in each of them will return back to $\ww$. The remaining ones will update $\ww$ to $\ww'\neq\ww$, such that $s(\ww')=s(\ww)$.

If $\ww'\neq\ww$, the distance between $\ww$ and $\ww'$ is  
\[
\frac{\sqrt{2}\alpha}{(1+\alpha)}s(\ww),\ \text{or}\ \sqrt{2}\alpha s(\ww),
\]
while the absolute difference between $m(\ww)$ and $m(\ww')$ is
\[
\frac{\alpha}{1+\alpha}s(\ww),\ \text{or}\ \alpha s(\ww). 
\]

The possible trajectories of the dynamical trading system $(W,F,\sigma)$ are consecutive combinations of these two-period dynamical blocks. Recall that $s(\ww)\in((1+\alpha)\underline{s},(1-\gamma)\overline{a})$ for all $\ww\in U_2\setminus\partial U_2$. The trajectory starting from $\ww$ will be bounded within the region $\{\vv:s(\ww)/(1+\alpha)\leq s(\vv)\leq (1+\alpha)s(\ww)\}$, where
\[
(1+\alpha)\underline{s}\in (s(\ww)/(1+\alpha),s(\ww)),\ \text{and}\ (1-\gamma)\overline{a}\in(s(\ww),(1+\alpha)s(\ww)).
\]
So the region containing buffering overflows are
\[
\{\vv:s(\ww)/(1+\alpha)\leq s(\vv)<(1+\alpha)\underline{s}\}\cup\{\vv:(1-\gamma)\overline{a}<s(\vv)\leq(1+\alpha)s(\ww)\},
\]
where $s(\ww)/(1+\alpha)>\underline{s}$, and $(1+\alpha)s(\ww)<(1+\alpha)^3\underline{s}$. We show two trajectories starting from $\ww$ with seven periods in the $b$-$a$ plane.

Let the initial time be $t\in\Z$. We have a sequence with infinite bid-ask vectors, say,
\[
\{\dd_t,\D_{t+1},\D_{t+2},\dotsc,\D_\infty\}.
\]
Let $t=0$, and the spread of $\dd_0$ be $s$. Since all the states at the time $t\in 2\Z_+$ have the same spread $s$, the sequence of bid-ask vectors at the time $t\in 2\Z_+$ has a simple representation. Define $\D^\ast_t=\D_{2t}$ for all $t\in\Z_+$. We obtain a stochastic process $\{\D^\ast_t,\ t\in\Z_+\}$, where
\begin{equation}
\D^\ast_{t+1}=\D^\ast_t+\varepsilon_t(1,1)',
\end{equation}
where $\D_0^\ast=\dd_0$, and $\varepsilon_t\in\R$ for all $t\in \Z_+$. For each $\D^\ast_t=(B_{2t},A_{2t})$, we have $A_{2t}-B_{2t}=s$. So the process $\{\D^\ast_t,\ t\in\Z_+\}$ will stay on the line with a constant spread $s$ in the $b$-$a$ plane.

At each time $t\in\Z_+$, 
\[
\varepsilon_t\in\Big\{\pm\frac{\alpha}{1+\alpha}s,\pm\alpha s\Big\}.
\]
The probability distribution of $\varepsilon_t$ is determined by the function $\pi_B$ \wrt\ the mid-price of $\dd_{2t}$ at the time $2t$ and the function $\pi$ at the time $2t+1$. If the mid-price of $\dd_{2t}$ is close to $s/2$, the probability that $\varepsilon_t<0$ is $0$. If it is close to $\overline{a}-s/2$, then the probability that $\varepsilon_t>0$ is $0$. Thus, $\{\D^\ast_t,\ t\in\Z_+\}$ is bounded within a line segment with $(0,s)$ and $(\overline{a}-s,\overline{a})$ as its end points in the $b$-$a$ plane. Suppose the probability distribution of $\varepsilon_{t}$ at some time $t\in\Z_+$ is uniform, then its mean is $0$, and its volatility is greater than $2\alpha s/(1+\alpha)$ and less than $2\alpha s$.

We can also consider the stochastic bid-ask spread $S_t$ with an initial spread $S_0=s$. If $0\leq r_s(U_2)<\alpha(1+\alpha)\underline{s}$, we have $S_t\in\{s,(1+\alpha)s,s/(1+\alpha)\}$, and $S_{t+1}\neq S_t$ for all $t\in\Z_+$. So the process $\{S_t,\ t\in\Z_+\}$ can be represented by
\begin{equation}
S_{t}=\eta_t s,
\end{equation}
where $\eta_t\in\{1+\alpha,1/(1+\alpha)\}$ for all $t\in\Z_+$. The probability distribution of $\eta_t$ is determined by the function $\pi_L$ \wrt\ the spread of $\dd_t$ at the time $t$. If we come across a uniform distribution at a time $t$, say, $\eta_t$ takes either value with an equal probability, then the mean of $\log\eta_t$ is $0$.

Notice that $\eta_t$ and $\varepsilon_t$ are not independent, since $\varepsilon_t$ is fully captured by $\eta_{2t}$ and $\eta_{2t+1}$ for all time $t\in2\Z_+$. If $\vert\varepsilon_t\vert=\alpha s$, then $\eta_{2t}=1+\alpha$ and $\eta_{2t+1}=1/(1+\alpha)$. If $\vert\varepsilon_t\vert=\alpha s/(1+\alpha)$, then $\eta_{2t}=1/(1+\alpha)$ and $\eta_{2t+1}=1+\alpha$. Thus $\eta_{2t}\eta_{2t+1}=1$ for all $t\in\Z_+$.

\subsection{Controlled Mid-Price Dynamics}

If the $m$-range of $V_2$, rather than the $s$-range of $U_2$, is sufficiently small, we may establish a similar result to Proposition \ref{pro5}. Assume $0\leq r_m(V_2)<\alpha(1+\alpha)\underline{s}/2$. By the condition $r_m(V_2)\geq 0$, we have $V_2\neq\emptyset$ and $W_B\cap W_S=\emptyset$, so $(1+\delta)\underline{s}\leq(1-\epsilon)(2\overline{a}-\underline{s})$, or
\[
\frac{\underline{s}/2}{\overline{a}-\underline{s}/2}\leq\frac{1-\epsilon}{1+\delta}.
\]
On the other hand, if $r_m(V_2)<\alpha(1+\alpha)\underline{s}/2$, then
\[
(1-\epsilon)(2\overline{a}-\underline{s})-(1+\delta)\underline{s}<\alpha(1+\alpha)\underline{s},
\]
which implies
\[
\frac{\underline{s}/2}{\overline{a}-\underline{s}/2}>\frac{1-\epsilon}{1+\delta+\alpha(1+\alpha)}.
\]

As a result, if $0\leq r_m(V_2)<\alpha(1+\alpha)\underline{s}/2$, the domain $W$ will have the following property,
\begin{equation}
\frac{1-\epsilon}{1+\delta+\alpha(1+\alpha)}<\frac{\underline{s}/2}{\overline{a}-\underline{s}/2}\leq\frac{1-\epsilon}{1+\delta}.
\end{equation}
Under that condition, we have a nonempty $V_2$,
\[
V_2=\{\ww:(1+\delta)\underline{s}/2\leq m(\ww)\leq(1-\epsilon)(\overline{a}-\underline{s}/2)\},
\]
where $m:W\to\R$ is the mid-price function. Let the boundary of $V_2$ be
\[
\partial V_2=\{\ww:m(\ww)=(1+\delta)\underline{s}/2\}\cup\{\ww:m(\ww)=(1-\epsilon)(\overline{a}-\underline{s}/2)\}.
\]

Consider again a continuous function $\pi$, so $\pi_B:W\to[0,1]$ is also continuous. $\pi_B(\ww)=1$ for all $\ww\in W_B$, and $\pi_B(\ww)=0$ for all $\ww\in W_S$. Recall that $\pi_B(\ww)$ is monotonic \wrt\ $\log m(\ww)$ on $V_2$. $\pi_B(\ww)=1$ if $m(\ww)=(1+\delta)\underline{s}/2$, and $\pi_B(\ww)=0$ if $m(\ww)=(1-\epsilon)(\overline{a}-\underline{s}/2)$, so $\pi_B(\ww)$ is decreasing \wrt\ $\log m(\ww)$ on $V_2$.

The lower bound of $\underline{s}/(2\overline{a}-\underline{s})$ is a sufficient condition for the existence of dynamical two-period switching blocks, by which any state in $V_2$ at the time $t\in 2\Z_+$ can return back into $V_2$ after two periods. This statement is true, if we can confirm that any $\ww\in\partial V_2$ will be updated to some $\ww'\in V_1=W\setminus V_2$ in the time $t+1$. Note that
\[
\max_{\vv\in V_2}m(\vv)=(1-\epsilon)(\overline{a}-\underline{s}/2),\ \text{and}\ \min_{\vv\in V_2}m(\vv)=(1+\delta)\underline{s}/2,
\]
and they can only be achieved by the states in $\partial V_2$. 

If $\ww\in\partial V_2$ such that $m(\ww)=\max_{\vv\in V_2}m(\vv)$, then $\pi_S(\ww)=1-\pi_B(\ww)=1$. So the next mid-price  generated by a sell-type trader is
\[
m(\ww)-\frac{\alpha}{2}s(\ww),\ \text{or}\ m(\ww)-\frac{\alpha}{2(1+\alpha)}s(\ww).
\]
The largest mid-price should be less than $\min_{\vv\in V_2}m(\vv)$, so that the new state will sufficiently be in $V_1$. Thus we have
\[
\max_{\vv\in V_2}m(\vv)-\frac{\alpha}{2(1+\alpha)}\min_{\ww\in\partial V_2}s(\ww)<\min_{\vv\in V_2}m(\vv).
\]
Note that $s(\ww)/(1+\alpha)\geq(1+\alpha)\underline{s}$ for all $\ww\in U_2\cap V_2$, so $\min_{\ww\in\partial V_2}s(\ww)=(1+\alpha)^2\underline{s}$. Then we obtain
\[
(1-\epsilon)(2\overline{a}-\underline{s})-\alpha(1+\alpha)\underline{s}<(1+\delta)\underline{s},
\]
which is equivalent with the condition $r_m(V_2)<\alpha(1+\alpha)\underline{s}/2$.

If $\ww\in\partial V_2$ such that $m(\ww)=\min_{\vv\in V_2}m(\vv)$, then $\pi_B(\ww)=1$. So the new mid-price generated by a buy-type trader is
\[
m(\ww)+\frac{\alpha}{2}s(\ww),\ \text{or}\ m(\ww)+\frac{\alpha}{2(1+\alpha)}s(\ww).
\]
We require that the smallest mid-price should be greater than $\max_{\vv\in V_2}m(\vv)$, so
\[
\min_{\vv\in V_2}m(\vv)+\frac{\alpha}{2(1+\alpha)}\min_{\ww\in\partial V_2}s(\ww)>\max_{\vv\in V_2}m(\vv),
\]
which is same as the above inequality.

\begin{pro}\label{pro6}
The dynamical trading system $(W,F,\sigma)$ is stochastically stable, and its trajectory of bid-ask pairs will stay almost surely within the region
\[
\{\ww:\underline{m}<m(\ww)<\overline{m}\}\cap U_2,
\]
where $\underline{m}$ and $\overline{m}$ are constants in the trading system, if (i) $V_2$ and $U_2$ are nonempty, (ii) $0\leq r_m(V_2)<\alpha(1+\alpha)\underline{s}/2$ and $r_s(U_2)>\alpha(1+\alpha)(2+\alpha)\underline{s}$, and (iii) $\pi_B(\ww)$ is strictly monotonic \wrt\ $\log m(\ww)$ on $V_2$.
\end{pro}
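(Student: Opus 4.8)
The plan is to run the proof of Proposition~\ref{pro5} with the roles of the spread and the mid-price interchanged, using $V_2$ (now a thin band in the $m$-direction) as the controlled region and confining the free $s$-coordinate to $U_2$ by the restoring force of Proposition~\ref{pro4}. First I would record what the paragraphs preceding the statement already supply: $V_2=\{\ww:(1+\delta)\underline{s}/2\le m(\ww)\le(1-\epsilon)(\overline{a}-\underline{s}/2)\}$ is nonempty with $\min_{\vv\in V_2}m(\vv)=(1+\delta)\underline{s}/2$ and $\max_{\vv\in V_2}m(\vv)=(1-\epsilon)(\overline{a}-\underline{s}/2)$, and by condition~(iii) the function $\pi_B$ is strictly decreasing in $\log m(\ww)$ on $V_2$, equal to $1$ on the lower edge and to $0$ on the upper edge (so $\pi_S=0$ and $\pi_S=1$ there, respectively).

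Next I would establish the mid-price confinement in the three steps that are dual to Proposition~\ref{pro5}. From any $\ww\in\partial V_2$ a single forced move overshoots $V_2$ into $V_1=W\setminus V_2$: on the upper edge $\pi_S(\ww)=1$ and even the smallest sell step $-\frac{\alpha}{2(1+\alpha)}s(\ww)$ drops the mid-price below $\min_{\vv\in V_2}m(\vv)$, precisely because $r_m(V_2)<\alpha(1+\alpha)\underline{s}/2$ together with $\min_{\ww\in\partial V_2}s(\ww)=(1+\alpha)^2\underline{s}$ was calibrated to force exactly this, and symmetrically on the lower edge where $\pi_B(\ww)=1$. From any $\ww\in V_1$, membership in $W_B$ forces a buy move (mid-price up) and membership in $W_S$ forces a sell move (mid-price down), so since $\Lambda(\ww)\cap V_1$ has finitely many nodes the trajectory re-enters $V_2$ after finitely many periods. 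Finally, for $\ww\in V_2\setminus\partial V_2$ I would read off the two-period blocks: whenever a move carries the mid-price out of $V_2$ the next move is forced back, giving blocks of shape (buy,\,sell) or (sell,\,buy), where the four minimal periodic blocks $\{\sigma(1),\sigma(4)\}$, $\{\sigma(4),\sigma(1)\}$, $\{\sigma(2),\sigma(3)\}$, $\{\sigma(3),\sigma(2)\}$ return the pair to $\ww$ and fix the mid-price, whereas the mixing blocks $\{\sigma(1),\sigma(3)\}$, $\{\sigma(2),\sigma(4)\}$ and their reverses only return it to a neighborhood of $m(\ww)$.

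For the $s$-coordinate I would invoke the mechanism of Proposition~\ref{pro4} in the limit-market direction: condition~(ii) gives $r_s(U_2)>\alpha(1+\alpha)(2+\alpha)\underline{s}$, which is exactly the hypothesis that produced a state $\vv$ with $N(\vv)\subset U_2$, so the limit-market switching (market orders when the spread is small, limit orders when it is large) confines the spread to $U_2$. I would then repeat the absorbing-probability computation of Proposition~\ref{pro4}: writing $p(\ww)$ for the probability of ever leaving the target region, the strict monotonicity of $\pi_B$ on $V_2$ makes the confining move have a probability $\varepsilon\ne1$, which forces the self-consistent identity $p(\ww)=\varepsilon\,p(\ww)$ and hence $p(\ww)=0$, yielding almost-sure confinement to $\{\ww:\underline{m}<m(\ww)<\overline{m}\}\cap U_2$.

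The hard part is the two-period step, because here it fails in the way it did not for Proposition~\ref{pro5}: the mixing blocks do \emph{not} preserve the mid-price. A direct computation gives the residual displacements $+\frac{\alpha^2}{2(1+\alpha)^2}s(\ww)$ for $\{\sigma(1),\sigma(3)\}$, $-\frac{\alpha^2}{2}s(\ww)$ for $\{\sigma(2),\sigma(4)\}$, and the opposite signs for their reverses, so the even-time mid-price is not pinned to a single value as the spread was in Proposition~\ref{pro5} but executes a random walk carrying a restoring drift. The task is therefore to \emph{bound} this walk rather than to fix it: I would argue that each excursion out of $V_2$ is at most one step of size $\frac{\alpha}{2}\max_{\vv\in U_2}s(\vv)$, a finite constant since the spread already lives in $U_2$, so the mid-price never leaves the enlarged band with $\underline{m}=\min_{\vv\in V_2}m(\vv)-\frac{\alpha}{2}\max_{\vv\in U_2}s(\vv)$ and $\overline{m}=\max_{\vv\in V_2}m(\vv)+\frac{\alpha}{2}\max_{\vv\in U_2}s(\vv)$. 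Pinning down these constants, and checking that the forced restoring moves prevent any accumulation of the mixing displacements, is where the real work lies.
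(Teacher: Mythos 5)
Your proposal is correct and follows essentially the route the paper intends: the paper omits the proof of Proposition~\ref{pro6}, saying only that it is ``roughly similar'' to that of Proposition~\ref{pro5}, and your reconstruction --- forced overshoot at $\partial V_2$ using the calibration $r_m(V_2)<\alpha(1+\alpha)\underline{s}/2$ with $\min_{\ww\in\partial V_2}s(\ww)=(1+\alpha)^2\underline{s}$, finite-lattice return from $V_1$, two-period blocks with the Proposition~\ref{pro4} mechanism confining the spread to $U_2$, and a widened band absorbing the correctly computed mixing-block residuals $+\frac{\alpha^2}{2(1+\alpha)^2}s(\ww)$ for $\{\sigma(1),\sigma(3)\}$ and $-\frac{\alpha^2}{2}s(\ww)$ for $\{\sigma(2),\sigma(4)\}$ --- lands exactly on the paper's constants, since your $\underline{m}=\min_{\vv\in V_2}m(\vv)-\frac{\alpha}{2}\max_{\vv\in U_2}s(\vv)$ and $\overline{m}=\max_{\vv\in V_2}m(\vv)+\frac{\alpha}{2}\max_{\vv\in U_2}s(\vv)$ coincide with the $(1+\delta)\underline{s}/2-\alpha(1-\gamma)\overline{a}/2$ and $(1-\epsilon)(\overline{a}-\underline{s}/2)+\alpha(1-\gamma)\overline{a}/2$ stated in the remark following the proposition. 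The one slip is your claim that each excursion out of $V_2$ lasts at most one step --- the forced return move can be as small as $\frac{\alpha}{2(1+\alpha)}s$ and need not cover the overshoot, so excursions may take several periods --- but this is harmless, because $\pi_S=1$ on $W_S$ and $\pi_B=1$ on $W_B$ force the mid-price to move monotonically back toward $V_2$ throughout any excursion, so the band $(\underline{m},\overline{m})$ is preserved and your constants are unchanged.
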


The proof of Proposition \ref{pro6} is roughly similar to that of Proposition \ref{pro5}, and thus not provided here. The trajectory of bid-ask pairs will again not be bounded within the kernel region $K=V_2\cap U_2$, but within $K$ and parts of the buffering region $V_1\cap U_2$, so that the trading system can contain certain buffering overflows to support its stability. In the limit, each state on the trajectory will be either in $K=V_2\cap U_2$ or in the buffering region $V_1\cap U_2$ with equal probability.

Since the updating process of the mid-price also depends on the spread of the states on the trajectory, the buffering region used to hold the buffering overflows is very large. Actually, we can notice that the $m$-range of the region $\{\ww:\underline{m}<m(\ww)<\overline{m}\}\cap U_2$ is quite wide, as 
\[
\underline{m}=(1+\delta)\underline{s}/2-\alpha(1-\gamma)\overline{a}/2,\ \text{and}\ \overline{m}=(1-\epsilon)(\overline{a}-\underline{s}/2)+\alpha(1-\gamma)\overline{a}/2,
\]
where $(1-\gamma)\overline{a}=\max_{\ww\in U_2}s(\ww)$. Thus its $m$-range is
\[
\overline{m}-\underline{m}=r_m(V_2)+\alpha(1-\gamma)\overline{a},
\]
which is greater than or equal to $\alpha(1-\gamma)\overline{a}$, and less than $\alpha(1-\gamma)\overline{a}+\alpha(1+\alpha)\underline{s}/2$. So $\overline{m}-\underline{m}=O(\overline{a})$. But the $s$-range of the region $\{\ww:\underline{s}<s(\ww)<(1+\alpha)^3\underline{s}\}\cap U_2$ is only $((1+\alpha)^3-1)\underline{s}=O(\underline{s})$, where $O(\underline{s})\ll O(\overline{a})$.

Proposition \ref{pro5} and \ref{pro6} also give us interesting observations on the volatility of the mid-price and the bid-ask spread on a limit-order market. If the spread $s$ in the kernel region $K=U_2\cap V_2$ is required to be \emph{ex ante} stable, then the random trajectory in the dynamical trading system will be bounded within a certain region with a sufficiently small $s$-range. However, if the mid-price $m$ in $K$ is required to be \emph{ex ante} stable, the random trajectory can not remain in a certain region with a small $m$-range. Therefore, we can say the disorder in the bid-ask spread mainly comes from its intrinsic volatility, while the disorder in the mid-price is mainly stored in the information on the market, rather than its intrinsic characteristics.

\section{Further Discussions}

In this work, we take a dynamical perspective to investigate the microstructure of the limit order market. A limit order market is theoretically considered as a dynamical trading system, in which traders' decision processes interact with their different trading types iteratively. In our analysis, the perfect information required to model strategic behaviors is loosened to the knowledge of the so-called atomic trading mechanism between consecutive periods. So we actually have a simplified assumption, but again obtain a powerful ability to understand the dynamical properties of the order flow and the order book in the limit order market. 

As we know, the measures of the liquidity are usually associated with its four dimensions, say, immediacy, width, depth, and resiliency (see \eg\ Harris \cite{harris03}, \S\,19.2). In this work, however, we implicitly do not consider the role of depth in the market liquidity, as we assume that the best quotes in a limit order market always have the same number of shares as any possible coming order. This assumption allows us to study a trading process with only one marginal trader in each trading period. Henceforth, we provide a few interesting and suggestive results on such an induced dynamical trading system. 

If we, alternatively, assume the orders may have different trading volumes with the depth at the quotes in our trading system, we should take account of the possibility that different traders in the population $N$ can form a trading group. Otherwise, if there is no trading group, and all traders in $N$ individually participate in the market, then we would actually only introduce additional jumps and diffusions into our theoretical system of this work, noting that the submitted orders are sequentially operated according to the principle of price-time priority. Thus in an analytical framework allows the existence of trading groups, we can study a limit order market characterized with more dimensions on its liquidity.

\bigskip

\bigskip


\begin{thebibliography}{99}
\bibitem{al-suhaibani00}
Al-{S}uhaibani, M., and Kryzanowski, L. An exploratory analysis of the order book, and order flow and execution on the {S}audi stock market. \emph{J. Bank. Finance 24}, 8 (2000), 1323--1357.
\bibitem{biais95}
Biais, B., Hillion, P., and Spatt, C. An empirical analysis of the limit order book and the order flow in the {P}aris {B}ourse. \emph{J. Finance 50}, 5 (1995), 1655--1689.
\bibitem{demsetz68}
Demsetz, H. The cost of transacting. \emph{Q. J. Econ. 82}, 1 (1968), 33--53.
\bibitem{farmer05}
Farmer, J. D., Patelli, P., and Zovko, I. I. The predictive power of zero intelligence in financial markets. \emph{Proc. Natl. Acad. Sci. USA 102}, 6 (2005), 2254--2259.
\bibitem{foucault05}
Foucault, T., Kadan, O., and Kandel, E. Limit order book as a market for liquidity. \emph{Rev. Financ. Stud. 18}, 4 (2005), 1171--1217.
\bibitem{glosten85}
Glosten, L. R., and Milgrom, P. R. Bid, ask and transaction prices in a specialist market with heterogeneously informed traders. \emph{J. Financ. Econ. 14}, 1 (1985), 71--100.
\bibitem{harris96}
Harris, L., and Hasbrouck, J. Limit orders: the {S}uper{DOT} evidence on order submission strategy. \emph{J. Financ. Quant. Anal. 31}, 2 (1996), 213--231.
\bibitem{harris03}
Harris, L. \emph{Trading and Exchanges}. Oxford University Press, New York, 2003.
\bibitem{kyle85}
Kyle, A. S. Continuous auctions and insider trading. \emph{Econometrica 53}, 6 (1985), 1315--1335.
\bibitem{lehmann94}
Lehmann, B. N., and Modest, D. M. Trading and liquidity on the {T}okyo {S}tock {E}xchange: a bird's eye view. \emph{J. Finance 49}, 3 (1994), 951--984.
\bibitem{osborne59}
Osborne, M. F. M. Brownian motion in the stock market. \emph{Oper. Res. 7}, 2 (1959), 145--173.
\bibitem{parlour98}
Parlour, C. A. Price dynamics in limit order markets. \emph{Rev. Financ. Stud. 11}, 4 (1998), 789--816.
\bibitem{smith03}
Smith, E., Farmer, J. D., Gillemot, L., and Krishnamurthy, S. Statistical theory of the continuous double auction. \emph{Quant. Finance 3}, 6 (2003), 481--514.
\bibitem{sornette03}
Sornette, D. \emph{Why Stock Markets Crash}. Princeton University Press, Princeton, 2003.
\end{thebibliography}
\end{document}